\documentclass[journal]{IEEEtran}

\usepackage[bookmarks=false]{hyperref}
\hypersetup{
    colorlinks=true,
    linkcolor=black,
    filecolor=black,
    urlcolor=black,
}

\usepackage{tikz}
\usetikzlibrary{patterns}
\usetikzlibrary{positioning,arrows.meta}
\usetikzlibrary{shapes.geometric, arrows.meta, positioning, calc}
\usepackage{algpseudocode}
\usepackage[utf8]{inputenc}
\usepackage[T1]{fontenc}
\usepackage{algorithm} 
\usepackage{nomencl}
\usepackage[normalem]{ulem}
\usepackage{etoolbox}
\usepackage[dvipsnames]{xcolor}
\usepackage{wrapfig}
\usepackage{graphicx}
\usepackage{makecell}
\graphicspath{ {./images/} }

\newcommand{\maybe}[1]{}
\definecolor{salmon}{RGB}{255,111,105}

\usepackage{tikz}
\usetikzlibrary{matrix}

\makeatletter
\newdimen\multi@col@width
\newdimen\multi@col@margin
\newcount\multi@col@count
\tikzset{
  multicol/.code={%
    \global\multi@col@count=#1\relax
    \global\let\orig@pgfmatrixendcode=\pgfmatrixendcode
    \global\let\orig@pgfmatrixemptycode=\pgfmatrixemptycode
    \def\pgfmatrixendcode##1{\orig@pgfmatrixendcode%
      ##1%
      \pgfutil@tempdima=\pgf@picmaxx
      \global\multi@col@margin=\pgf@picminx
      \advance\pgfutil@tempdima by -\pgf@picminx
      \divide\pgfutil@tempdima by #1\relax
      \global\multi@col@width=\pgfutil@tempdima
      \pgf@picmaxx=.5\multi@col@width
      \pgf@picminx=-.5\multi@col@width
      \global\pgf@picmaxx=\pgf@picmaxx
      \global\pgf@picminx=\pgf@picminx
      \gdef\multi@adjust@position{%
        \setbox\pgf@matrix@cell=\hbox\bgroup
        \hfil\hskip-\multi@col@margin
        \hfil\hskip-.5\multi@col@width
        \box\pgf@matrix@cell
        \egroup
      }%
      \gdef\multi@temp{\aftergroup\multi@adjust@position}%
      \aftergroup\multi@temp
    }
    \gdef\pgfmatrixemptycode{%
      \orig@pgfmatrixemptycode
      \global\advance\multi@col@count by -1\relax
      \global\pgf@picmaxx=.5\multi@col@width
      \global\pgf@picminx=-.5\multi@col@width
      \ifnum\multi@col@count=1\relax
       \global\let\pgfmatrixemptycode=\orig@pgfmatrixemptycode
      \fi
    }
  }
}
\makeatother
\usepackage{blindtext}
\usepackage{threeparttable} 
\usepackage{multicol}
\ifCLASSINFOpdf
  \graphicspath{{../pdf/}{../jpeg/}}
  \DeclareGraphicsExtensions{.pdf,.jpeg,.png}
\else

\fi

\usepackage{amsmath}
\usetikzlibrary{
positioning,
calc,
fit,
arrows.meta,
backgrounds,
decorations.pathreplacing
}

\usepackage{cite}
\usepackage{amsmath,amssymb,amsfonts}
\usepackage{algorithm}
\usepackage{acro}
\usepackage{algpseudocode}
\usepackage{graphicx}
\usepackage{textcomp}
\usepackage{xcolor}
\usepackage{tikz}
\usepackage{adjustbox}
\usepackage[version=4]{mhchem}
\usepackage{booktabs}
\usepackage{multirow}
\usepackage{siunitx}
\usepackage{eurosym}
\usepackage{mhchem}
\usepackage{amssymb}
\usepackage{tikz}
\usepackage{pgfplots}
\usepackage{multirow}
\usepackage{graphicx}
\usepackage{hyperref}
\usepackage{eurosym}
\usepackage{subcaption}
\def\BibTeX{{\rm B\kern-.05em{\sc i\kern-.025em b}\kern-.08em
    T\kern-.1667em\lower.7ex\hbox{E}\kern-.125emX}}
\usepackage{url}

\usepackage[inkscapelatex=false]{svg}

\usepackage{amsthm}
\newtheorem{theorem}{Theorem}[section] 
\newtheorem{definition}{Definition} 
\newtheorem{remark}{Remark}
\newtheorem{lemma}[theorem]{Lemma}
\newtheorem{corollary}[theorem]{Corollary}
\newtheorem{proposition}[theorem]{Proposition}

\DeclareAcronym{is}{
  short=IS,
  long=In-sample,
}
\DeclareAcronym{amc}{
    short=AMC,
    long= Additivity of marginal contributions at the top level,
}
\DeclareAcronym{brp}{
    short=BRP,
    long= Balance Responsible Party,
}
\DeclareAcronym{ppa}{
    short=PPA,
    long= Power Purchase Agreement,
}
\DeclareAcronym{wmape}{
    short = WMAPE,
    long = Weighted Mean Absolute Percentage Error
}
\DeclareAcronym{vcg}{
    short = VCG,
    long = Vickrey-Clarke-Groves
}

\begin{document}
%

\title{Sharing the Gains of Aggregation: \\ Cooperative Imbalance Cost Allocation}


\author{Asmus W. Eriksen and Jalal Kazempour, \textit{Senior~Member,~IEEE}

\vspace{-5mm}

\thanks{
Asmus W. Eriksen and Jalal Kazempour are with the Department of Power and Energy Systems, Technical University of Denmark, 2800 Kgs. Lyngby, Denmark (e-mail: \{aswin, jalal\}@dtu.dk)
}
}

    




\vspace{-5mm}
\maketitle


\IEEEaftertitletext{\vspace{-0.8\baselineskip}}
\maketitle
\thispagestyle{plain}
\pagestyle{plain}
\begin{abstract}
A \textit{facilitator} is an intermediary that offers renewable producers and consumers fixed-price contracts and, acting as their balance responsible party, manages the residual  imbalances in the market. Pooling imperfectly correlated residuals nets consumers' imbalances  and reduces the portfolio's total imbalance cost, but raises an allocation question: how should  these savings be divided among heterogeneous consumers? We formulate this as a  cooperative game, the \textit{imbalance netting game}, and compare six allocation mechanisms under a two-price imbalance settlement in terms of computational requirements, budget balance, group rationality, and additivity. We establish three analytical results: a necessary and sufficient condition for budget balance of the marginal cost contribution mechanism, group rationality of the marginal cost contribution and Vickrey-Clarke-Groves mechanisms, and a necessary and sufficient condition for when the Gately point is well-defined. On Danish 2024 data for 19 consumers, aggregation reduces imbalance cost by 10\%, but how these savings are shared depends strongly on the allocation rule. A flat-rate allocation proportional to consumption, used as a benchmark for socialized imbalance pricing, charges some consumers more than twice their standalone cost, whereas all the game-theoretic mechanisms studied produce stable allocations in practice. 
\end{abstract}

\begin{IEEEkeywords}
Cost allocation, cooperative game theory, imbalance netting game, aggregation, balancing market
\end{IEEEkeywords}


\section{Introduction}\label{sec:introduction}
\IEEEPARstart{A}{s} electricity markets liberalize and the share of variable renewable generation increases, new intermediaries are emerging to shield consumers and renewable producers from market and balancing risk. We refer to such an intermediary as a \textit{facilitator}. The facilitator negotiates a \ac{ppa} between a renewable power producer and one or more consumers. In addition, the facilitator manages the \textit{residual}, i.e., the part of each consumer's net position not covered by its \ac{ppa}, which is uncertain because both production and consumption are uncertain. A concrete example, and the motivation for this work, is the Danish supplier Reel\footnote{\url{https://reel.energy}}, which gives a portfolio of producers and consumers a fixed price and manages the residual balancing risk of the portfolio.

The facilitator therefore has two roles: brokering the \ac{ppa}s and managing the residual. In this paper, the \ac{ppa} contracts are taken as fixed and given, and we focus on the allocation of the residual imbalance cost. As illustrated in Fig.~\ref{fig:facilitator}, each consumer holds a fixed-price \ac{ppa} covering a share of renewable production, while the remaining net position must be traded in the market. Because both demand and production are uncertain, this position carries forecast risk. Acting as the \ac{brp} for the portfolio, the facilitator bids the aggregated residual in the day-ahead market, settles the resulting deviations in the balancing market, and prices the \textit{imbalance cost} back to its consumers. By doing so, the facilitator improves the bankability of PPAs for smaller consumers and projects, and offers consumers a hedge against balancing risk they may not manage on their own.

The consumer's energy bill, excluding taxes and grid tariffs, has four components: A fixed \ac{ppa} price, a fixed service fee, day-ahead market charges for any residual consumption or generation, and an imbalance charge, i.e., the consumer's share of the portfolio's imbalance cost. The first two are set in advance, and the third is settled directly at the day-ahead price. Only the imbalance charge and its division among consumers are studied here, with the \ac{ppa} price and the service fee taken as given.

\begin{figure}[t]
    \centering
    \begin{tikzpicture}[
  scale=0.55,
  >=Stealth,
  box/.style={draw, line width=0.6pt, fill=white, align=left, font=\scriptsize, inner sep=5pt},
  every node/.style={font=\scriptsize}
]

\node[box, anchor=west, minimum width=2.2cm] (ren) at (0,6.35) {Renewables};
\node[box, anchor=west, minimum width=2.2cm] (dam) at (0,3.8) {Day-ahead\\ [-1pt] \hspace{1mm} market};
\node[box, anchor=west, minimum width=2.2cm] (bm)  at (0,1.9) {Balancing\\[-1pt] \hspace{0.8mm} market};

\node[box, anchor=west, minimum width=2.2cm, minimum height=0.7cm] (fac) at (7.85,1.9) {Facilitator};

\coordinate (facSegL) at ($(fac.south)+(0.55,0.06)$);
\draw[salmon, line width=0.65pt] (facSegL) -- ($(fac.south east)+(-0.03,0.06)$);
\draw[salmon, line width=0.65pt]
  ($(facSegL)+(0.4,0.0)$)
    to[out=55,in=180] ($(facSegL)+(0.8,0.65)$)
    to[out=0,in=125]  ($(facSegL)+(1.2,0)$);
    
\draw[salmon, line width=0.6pt, ->] (fac.south east) ++(-0.6,-0.05) -- ++(0,-0.5);
\node[salmon, anchor=north west, font=\scriptsize] at ($(fac.south east)+(-3.0,-0.4)$)
  {Aggregated distribution:};
\node[salmon, anchor=north west, font=\scriptsize] at ($(fac.south east)+(-4.5,-1)$)
  {Netting decreases the total imbalance};

\node[box, anchor=west, minimum width=2.0cm, minimum height=0.6cm] (cons1) at (8.05,5.55) {};
\node[box, anchor=west, minimum width=2.0cm, minimum height=0.6cm] (cons2) at (7.8,5.95) {};
\node[box, anchor=west, minimum width=2.0cm, minimum height=0.6cm] (cons3) at (7.55,6.35) {};

\node[anchor=north west, font=\scriptsize] at ($(cons3.north west)+(0.15,-0.0)$) {Consumer 1};

\coordinate (Lcons1) at ($(cons1.south)+(0.35,0.06)$);
\draw[salmon, line width=0.6pt] (Lcons1) -- ($(cons1.south east)+(-0.03,0.06)$);
\draw[salmon, line width=0.6pt]
  ($(Lcons1)+(0.35,0)$)
    to[out=12,in=180] ($(Lcons1)+(1.15,0.3)$)
    to[out=0,in=95]  ($(Lcons1)+(1.30,0)$);

\coordinate (Lcons2) at ($(cons2.south)+(0.35,0.06)$);
\draw[salmon, line width=0.6pt] (Lcons2) -- ($(cons2.south east)+(-0.03,0.06)$);
\draw[salmon, line width=0.6pt]
  ($(Lcons2)+(0.419,0)$)
    to[out=60,in=180] ($(Lcons2)+(0.719,0.3)$)
    to[out=0,in=120]  ($(Lcons2)+(1.019,0)$);

\coordinate (Lcons3) at ($(cons3.south)+(0.35,0.06)$);
\draw[salmon, line width=0.6pt] (Lcons3) -- ($(cons3.south east)+(-0.03,0.06)$);
\draw[salmon, line width=0.6pt]
  ($(Lcons3)+(0.20,0)$)
    to[out=95,in=180] ($(Lcons3)+(0.40,0.3)$)
    to[out=0,in=168]  ($(Lcons3)+(1.20,0)$);

\draw[salmon, line width=0.6pt, ->] (cons1.south east) ++(-0.05,0.15) -- ++(0.75,1.7);
\draw[salmon, line width=0.6pt, ->] (cons2.south east) ++(-0.05,0.15) -- ++(0.85,1.4);
\draw[salmon, line width=0.6pt, ->] (cons3.south east) ++(-0.05,0.15) -- ++(0.95,1.1);

\node[salmon, anchor=south west, font=\scriptsize] at ($(cons3.north east)+(-2.8,0.15)$)
  {Individual imbalance distributions};

\draw[line width=1pt, ->] (ren.east) -- (cons3.west);
\node[anchor=north, align=center] at ($(ren.east)!0.5!(cons1.west)+(-0.4,1.2)$) {Fixed PPA};

\draw[line width=1pt, <->] (fac.north) -- (cons1.south);
\node[anchor=west, align=left] at ($(fac.north)!0.5!(cons1.south)+(0.7,0.3)$) {Residuals};
\node[anchor=west, align=left] at ($(fac.north)!0.5!(cons1.south)+(0.15,-0.3)$) {(deficit/excess)};

\draw[line width=1pt, <->] (fac.north west) -- (dam.east);
\node[anchor=south, align=center] at ($(fac.north west)!0.5!(dam.east)+(0.4,0.08)$) {Residuals};

\draw[line width=1pt, {Stealth}-{Stealth}] (fac.west) -- (bm.east);
\node[anchor=north, align=center] at ($(fac.west)!0.5!(bm.east)+(0,-0.0)$) {Imbalances};

\end{tikzpicture} 
\caption{\small{System overview. Each consumer holds a fixed-price Power Purchase Agreement (PPA), which is a fixed share of renewable production. The facilitator acts as the Balance Responsible Party (BRP) for the resulting residual net positions, bidding to cover expected shortfalls and offering expected surplus in the day-ahead market, then settling deviations in the balancing market. Pooling imperfectly correlated residuals reduces the portfolio's imbalance. Black arrows are power transactions.}}
    \label{fig:facilitator}
    \vspace{-3mm}
\end{figure}

The economic value of this service arises from a demand-side \textit{portfolio effect}. The consumers' residual positions are stochastic and imperfectly correlated, so their forecast errors partly offset when pooled. An imbalance from one consumer may be netted by an opposite imbalance from another, reducing the portfolio's relative imbalance volume and, under suitable pricing, its total imbalance cost. Aggregation can thus cause a  cost saving relative to the sum of the consumers' standalone imbalance costs.

The central question is how this saving should be allocated. A fully socialized scheme, in which all consumers pay the same average imbalance cost per unit of demand, is simple but ignores heterogeneity in forecastability, size, and contribution to portfolio imbalance. Charging every consumer its standalone imbalance cost, conversely, ignores the netting value that exists only through aggregation. Neither extreme is satisfactory for a heterogeneous portfolio, which motivates the research question of this paper: \textit{how should the imbalance cost in an aggregated  portfolio be allocated among consumers to be fair, stable, budget-balanced, and practically implementable?}

We study this question for inelastic consumers in the day-ahead and balancing markets, so that a consumer's imbalance stems from the deviation between its contracted and realized residual volume rather than from any dispatch decision. Imbalances are settled under an asymmetric \textit{two-price} imbalance settlement: deviations that help balance the power system are settled at the day-ahead price, while those that aggravate the system imbalance are settled at a less favorable imbalance price. We adopt this setting because it quantifies directly the value of reducing imbalance volume as an expected cost saving\footnote{Under a \textit{one-price} scheme, aggregation mainly reduces income volatility rather than imbalance cost. The two-price setting is also relevant given regulatory developments: while regulatory requirements are pushing Europe towards symmetric imbalance schemes~\cite{ImbalanceDirective}, the Danish transmission system operator's upcoming move to \textit{full cost balancing} points toward a more asymmetric ``polluter pays'' design~\cite{EnerginetFullCost}, which reintroduces incentives to reduce imbalance volume.}.

The value of aggregation under uncertainty is well established. As renewable penetration grows, market participants face higher uncertainty~\cite{Chu2016}, and aggregating geographically or technologically diverse resources can reduce aggregate variability~\cite{Fleten2018,Tekani2025}. As virtual power plants and other multi-owner aggregations become prevalent~\cite{Oren2025}, allocating the gains of aggregation becomes a central market design problem, for which cooperative game theory offers a natural framework. It has been applied to energy communities~\cite{MITRIDATI2021102177}, shared storage access~\cite{vespermann2021}, and alternatives to marginal pricing in wholesale markets~\cite{Exizidis2019}. Baeyens et al.~\cite{Baeyens2013} use it to share the gains of aggregating wind producers in forward markets.

Uncertainty is increasingly a demand-side phenomenon as well: the growth of distributed generation and intermittent self-sufficiency makes load-related risk a major driver of retailer and \ac{brp} exposure~\cite{Russo2020}. This motivates studying aggregation of consumers' residual demand. Closest to our setting, Guo et al.~\cite{Guo2021} model co-located data centers as a cooperative game with coalition-level re-optimized bidding and an expected-cost marginal allocation whose group rationality holds only asymptotically over repeated realizations. Valencia Zuluaga and Oren~\cite{Zuluaga2025} study a related uniform-price allocation and show that it preserves fairness in peer-to-peer markets. However, neither study systematically compares alternative cooperative allocation mechanisms or characterizes their budget balance and stability under two-price imbalance settlement. This paper addresses these gaps.

This paper makes four distinct contributions. 
First, we cast imbalance cost allocation in a facilitator's portfolio as a cooperative game, the \textit{imbalance netting game}, where coalition costs are determined by the two-price imbalance settlement. This framework makes fairness and stability operationally precise through the core and group rationality conditions.
Second, we systematically compare six allocation mechanisms (Shapley value, marginal cost contribution, \ac{vcg}, nucleolus, marginal price allocation, and Gately point) across four properties: computational tractability, budget balance, group rationality, and additivity. We identify which mechanisms remain practical as portfolio size grows.
Third, we derive analytical results establishing the game-specific properties reported in Table \ref{tab:mechanisms} of Section~\ref{sec:alloc}. Three are of independent interest: ($i$) a necessary and sufficient condition for the marginal cost contribution mechanism to be budget-balanced, showing that when violated, it systematically under-recovers costs and leaves the facilitator in budget deficit. ($ii$) group rationality for both the marginal cost and VCG mechanisms. ($iii$), a condition for when the Gately point is well-defined, with the unique core allocation that applies when it is not.
Fourth, on a real Danish case study of 19 consumers sharing a photovoltaic \ac{ppa} with 2024 data, we quantify how the mechanisms redistribute aggregation gains. 

The remainder of this paper is organized as follows. Section~\ref{sec:model} formulates the imbalance netting game and the pipeline for computing imbalance costs. Section~\ref{sec:alloc} presents the six allocation mechanisms. Section~\ref{sec:analytical} 
provides the analytical results. Section~\ref{sec:results} applies the framework to a real-world case study. Section~\ref{sec:conclusion} concludes.

\vspace{1mm}
\section{The Imbalance Netting Game}\label{sec:model}
We model the cost allocation as a cooperative game \cite{ACourseInCooperativeGameTheory}, in which a set of $|N|$ consumers of the facilitator can form binding coalitions, that is, subsets $S\!\subseteq\!N$ that cooperate to reduce the total procurement cost borne by the group; the full set $N\!=\!\{1,2,\dots,|N|\}$ is the \textit{grand coalition}. The procurement cost has two parts: the cost of energy bought in the day-ahead market, and the imbalance cost settled in the balancing stage. We write $N\!\setminus\!\{i\}$ for the coalition of all consumers except $i$.
\begin{remark}
Consumers pay the day-ahead price for electricity plus an imbalance charge. The day-ahead procurement cost is always covered by the consumer and only the day-ahead/balancing price spread is exposed to the facilitator. The game therefore reduces to allocating the imbalance cost alone.
\end{remark}

Each coalition is assigned a value by a characteristic function $c: 2^{{N}}\!\to\!\mathbb{R}$, and here $c(S)$ is the imbalance cost incurred by coalition $S$, so that $c(S)\!\ge\!0$ for every coalition $S$, as shown in Section \ref{sec:cost}. Of particular interest is the \textit{grand-coalition  cost} $c(N)$, the total imbalance cost of the fully aggregated portfolio. 
We now cast the allocation of imbalance costs as a cooperative game.

\begin{definition}[Imbalance netting game]
The \textit{imbalance netting game} is the cooperative game $(N,c)$ whose players are the facilitator's consumers and whose characteristic function $c(S)$ equals the imbalance cost incurred by coalition $S$'s aggregated residual under the two-price imbalance settlement. The value is not given a priori; it is computed for each coalition through the pipeline of Fig.~\ref{fig:flowchart}: a coalition quantity bid is formed (Section~\ref{sec:bidding}) and settled ex-post to obtain all coalition costs $c(S)$ (Section~\ref{sec:cost}), after which such costs are divided among consumers by an allocation mechanism (Section~\ref{sec:alloc}).
\end{definition}

The first two steps construct the game and the third solves it. We describe each in turn.

\begin{remark}\label{forc}
The facilitator generates all forecasts centrally, including renewable production and each consumer's consumption. Net-demand forecasts are then calculated as consumption minus its PPA-based share of renewable production. As a result, consumers cannot influence the game by misreporting their own forecasts.
\end{remark}

\begin{figure}[t]
    \centering
    \begin{tikzpicture}[
>=Stealth,
font=\footnotesize,
every node/.style={inner sep=1.2pt},
block/.style={
    draw,
    fill=yellow!20,
    minimum width=1.95cm,
    minimum height=0.72cm,
    line width=0.5pt,
    align=center
}
]


\node[block] (bid)   at (0,0)
{Bidding\\mechanism};

\node[block] (cost)  at (3.3,0)
{Ex-post cost\\calculation};

\node[block] (alloc) at (6.6,0)
{Allocation\\mechanism};


\path (bid.east)--(cost.west)
coordinate[midway] (m1);

\path (cost.east)--(alloc.west)
coordinate[midway] (m2);


\node[font=\scriptsize,align=center]
at ($(m1)+(0,4mm)$)
{Coalition bid\\[-0.1mm]$P^{\mathrm{DA}}_{t,S}$};

\node[font=\scriptsize,align=center]
at ($(m2)+(0,4mm)$)
{Coalition cost\\[-0.15mm]$c(S)$};


\draw[->,thick]
(bid.east)--(cost.west);

\draw[->,thick]
(cost.east)--(alloc.west);


\draw[->,thick]
($(bid.south)+(0,-0.34)$)--(bid.south);

\draw[->,thick]
($(cost.south)+(0,-0.34)$)--(cost.south);

\draw[->,thick]
(alloc.south)--($(alloc.south)+(0,-0.34)$);


\node[font=\scriptsize,align=center]
at ($(bid.south)+(0,-0.5)$)
{Forecasts};

\node[font=\scriptsize,align=center]
at ($(cost.south)+(0,-0.5)$)
{Realizations};

\node[font=\scriptsize,align=center]
at ($(alloc.south)+(0,-0.5)$)
{Individual costs $x_i$};

\end{tikzpicture}
        \vspace{-1mm}
    \caption{\small Three-step pipeline of the imbalance netting game. Individual forecasts of renewable production and consumption, made by the facilitator, feed the bidding step to produce day-ahead quantity bids $P^{\mathrm{DA}}_{t,S}$  for each coalition $S\!\subseteq\!N$. Realizations feed the ex-post cost calculation that defines the characteristic function $c(S)$, which shows the imbalance cost for the coalition $S$. The allocation mechanism divides the grand-coalition imbalance cost $c(N)$ into individual consumer costs $x_i$.}
    \label{fig:flowchart}
    \vspace{-2mm}
\end{figure}

\vspace{-2mm}
\subsection{Bidding Mechanism}\label{sec:bidding}
We bid each consumer's residual individually at the newsvendor-optimal quantile that minimizes its expected imbalance cost, and the facilitator submits the aggregate as the coalition bid. We use a probabilistic bidding strategy similar to the one proposed by~\cite{pinson2007}, under which the
optimal day-ahead quantity bid (in MW) for consumer $i$ in hour $t$ is
\begin{subequations}
\begin{align}
P^{\mathrm{DA}}_{t,i} &= \big(G^{\mathrm{E}_i}_{t}\big)^{-1}(\alpha_t),
  && \forall t \in \mathcal{T},\ \forall i \in \mathcal{N}, \\
\alpha_t &= \frac{\pi^-_t}{\pi^+_t + \pi^-_t}, && \forall t \in \mathcal{T}, \label{abc}
\end{align}
\end{subequations}
where $\mathcal{T}=\{1,2,\dots,|\mathcal{T}|\}$ is the set of hourly settlement periods over the study horizon, $G^{\mathrm{E}_i}_{t}(\cdot)$ is the cumulative distribution of net-demand for consumer $i$, and $\pi^+_t$ and $\pi^-_t$ are the penalties for positive (surplus) and negative (deficit) deviation. This is a newsvendor problem: $\alpha_t$ is the critical fractile that balances the cost of over- and under-bidding, and the optimal quantity bid is the corresponding quantile of the net-demand distribution. The distribution $G^{\mathrm{E}_i}_{t}(\cdot)$ is estimated by sampling historical consumer data, and $\pi^+_t$ and $\pi^-_t$ by sampling historical price data. Because the critical fractile $\alpha_t$ depends only on prices and not on the coalition, coalition bids are formed by summing singleton bids, $P^{\mathrm{DA}}_{t,S}=\sum_{i\in S}P^{\mathrm{DA}}_{t,i}$, so that the coalition imbalance $\Delta_{t,S}$, which is mathematically defined in Section \ref{sec:cost}, is additive across consumers by construction\footnote{Bidding at coalition-level quantiles would lower $c(S)$ further, since the summed singleton bids are feasible but generally suboptimal for the coalition's newsvendor problem, but would make $\Delta_{t,i}$ coalition-dependent, undermining per-consumer decomposition. The reported netting gains therefore constitute, in expectation, a lower bound on the total value achievable through aggregation. If $\alpha=\frac{1}{2}$, i.e., the penalties for surplus and deficit are symmetric in expectation, the bid is the median, which under symmetry of distributions equals the mean and is therefore additive.}. Netting does not arise from the bid but from ex-post pricing: the coalition is settled on the sign of its aggregate imbalance $\Delta_{t,S}$, so offsetting individual deviations reduces the coalition's imbalance cost.

\subsection{Ex-post Imbalance Cost Calculation}\label{sec:cost}
Recall that the coalition  imbalance is additive, i.e.,
$\Delta_{t,S}\!=\!\sum_{i\in S}\Delta_{t,i}$. It is the coalition's day-ahead quantity
bid $P^{\mathrm{DA}}_{t,S}$ plus its contracted share $\beta_S$ of the realized
portfolio renewable production $P^{\mathrm{G}}_t$, minus its realized consumption
$P^{\mathrm{D}}_{t,S}$:
\begin{equation}\label{eq:imbalance}
    \Delta_{t,S}=P^{\mathrm{DA}}_{t,S}+\beta_S P^{\mathrm{G}}_t-P^{\mathrm{D}}_{t,S},
    \qquad \forall t\in \mathcal{T},\ \forall S\subseteq \mathcal{N}.
\end{equation}

A positive imbalance $\Delta_{t,S}\!>\!0$ means available energy exceeds consumption, a surplus, so the coalition is \textit{long}. In contrast, a negative imbalance $\Delta_{t,S}\!<\!0$ means consumption exceeds available energy, a deficit, so the coalition is \textit{short}. We write $\Delta_{t,N}$ for the \textit{grand-coalition imbalance}.

Under the two-price imbalance settlement adopted here, a coalition's imbalance is priced by whether it \textit{helps} or \textit{harms} the imbalance of the entire power system: an imbalance that opposes the system imbalance (helping) is settled at the day-ahead market price and earns nothing beyond it, while one that reinforces it (harming) is settled at a less favorable price. Specifically, if the power system is long, a power excess by a coalition is priced below the day-ahead market price, meaning the coalition has lost the opportunity to sell that volume earlier at a higher price in the day-ahead market. Conversely, if the power system is short, a power deficit by a coalition is penalized above the day-ahead market price. Under this scheme, the imbalance price can never improve on the day-ahead market price, so there is no arbitrage opportunity between day-ahead and balancing markets and hence no incentive to create imbalance intentionally.  

The bids are compared ex-post to the realized day-ahead and balancing prices, production, and demand, giving the total cost each coalition incurs in the day-ahead and balancing markets.

Under the two-price imbalance settlement, the \textit{price spread} in hour $t$ for coalition $S\subseteq N$ is
\begin{equation} \label{spread}
    \lambda_{t,S} =
    \begin{cases}
        \lambda_t^{\mathrm{DA}}-\lambda_t^+, & \text{if } \Delta_{t,S} \le 0, \\
        \lambda_t^{\mathrm{DA}}-\lambda_t^-, & \text{if } \Delta_{t,S} > 0,
    \end{cases}
    \qquad \forall t\in \mathcal{T},\ \forall S\subseteq N,
\end{equation}
where $\lambda_t^+$, $\lambda_t^-$, and $\lambda_t^{\mathrm{DA}}$ are the up-regulation, down-regulation, and day-ahead market prices, respectively. The helping direction always settles at the day-ahead market price, so $\lambda_t^+\!=\!\lambda_t^{\mathrm{DA}}$ when the power system is long and $\lambda_t^-\!=\!\lambda_t^{\mathrm{DA}}$ when it is short. The spread $\lambda_{t,S}$ is therefore zero whenever the coalition helps the power system and its product with $\Delta_{t,S}$ is non-negative, and positive only when harming. For the grand coalition, we refer to $\lambda_{t,N}$ as the \textit{grand-coalition spread}.

The penalties $\pi^+_t$ and $\pi^-_t$ in~\eqref{abc} are the expected magnitudes of the two spread branches in~\eqref{spread}, taken over the historical price sample: a positive deviation (surplus) settles at the down-regulation spread and a negative deviation (deficit) at the up-regulation spread, so $\pi^+_t=\mathbb{E}[\lambda^{\mathrm{DA}}_t-\lambda^-_t]$ and $\pi^-_t=\mathbb{E}[\lambda^+_t-\lambda^{\mathrm{DA}}_t]$, both non-negative.

\begin{remark} \label{remark1}
As the facilitator is assumed to be a price-taker, the balancing prices are fixed within each direction, so the spread a coalition faces changes only when its net position switches between helping and harming the power system, not when its imbalance volume grows.
\end{remark}

The characteristic function of coalition $S$ is the sum over the horizon $\mathcal{T}$ of its imbalance priced at the spread,
\begin{equation}\label{eq:charfun}
    c(S) = \sum_{t \in \mathcal{T}} \lambda_{t,S}\,\Delta_{t,S}, \qquad \forall S\subseteq N,
\end{equation}
which is the imbalance cost of coalition $S$. In summary, a coalition helps the power system when its imbalance opposes the system imbalance and harms it otherwise. Helping settles at the day-ahead market price, so $\lambda_{t,S} = 0$ and no imbalance cost is incurred that hour, whereas harming settles below day-ahead when the coalition is long and above it when short, so $c(S) \geq 0$ in both harming cases.\footnote{Unlike the two-price imbalance settlement, $c(S)$ can be negative under a one-price scheme, implying that imbalances can be profitable and thus creating arbitrage opportunities between the day-ahead and balancing markets.}
The game is a minimization of the imbalance cost $c(S)$. This completes the imbalance netting game $(N,c)$.

\vspace{-2mm}
\subsection{Game Properties and Definitions}\label{sec:GameProperties}
Two questions are essential: whether it is beneficial for consumers to merge into larger coalitions, and how the resulting imbalance cost should be shared among them. The remainder of this section introduces the properties that answer these two questions.
The first question is one of coalition structure: whether combining coalitions ever destroys collective value.

\begin{definition}[Sub-additivity]
A game is sub-additive if $c(S_1 \cup S_2) \leq c(S_1)+c(S_2)$ for all disjoint $S_1,S_2 \subseteq N$.
\end{definition}
Sub-additivity guarantees that a merged coalition costs no more than its parts separately, so consumers are never worse off by merging. The imbalance netting game is sub-additive because pooling lets opposing imbalances cancel and reduce the imbalance cost, while aligned imbalances sum additively, so under the price-taking assumption in Remark \ref{remark1}, the pooled imbalance cost never exceeds the sum of the consumers' standalone imbalance costs.

A stronger condition is concavity, which strengthens this from ``merging never hurts'' to ``merging into larger coalitions is increasingly attractive.'' It is defined through the marginal contribution of a consumer to the imbalance cost.

\begin{definition}[Marginal Cost Contribution]\label{MCC_def}
The marginal cost contribution of consumer $i$ to a coalition $S\!\subseteq\!N\!\setminus\!\{i\}$ is $M_i(S)\!=\!c(S \cup \{i\})\!-\!c(S)$, the change in coalition imbalance cost when $i$ joins $S$. For the grand coalition, we write $M_i\!=\!M_i(N\!\setminus\!\{i\})$.
\end{definition}

\begin{definition}[Concavity]
A game is concave if
\begin{equation}
    c(S_1 \cup S_2)+c(S_1 \cap S_2)\leq c(S_1)+c(S_2), \quad \forall S_1,S_2\subseteq N.
\end{equation}
\end{definition}
This condition has a direct reading in terms of marginal cost contributions. To obtain the equivalent marginal contribution interpretation, let $S_1\!=\!A\cup\{i\}$ and $S_2\!=\!B$, where $A\subseteq B$ and $i\notin B$. 
The definition rearranges to $M_i(B) \leq M_i(A)$. Concavity therefore requires that a consumer's marginal contribution to imbalance cost does not increase as the coalition it joins grows, so that adding a consumer to a larger coalition saves at least as much as adding it to a smaller one. It is a desirable property because it guarantees that a \textit{stable} allocation exists, a notion made precise later.

The imbalance netting game in our setting, though sub-additive, is \textit{not} concave. Let us provide a counterexample: Suppose the power system is short, and consider coalitions $A \subseteq B$ with $\Delta_A\!=\!-2$~MW (short, harming) and $\Delta_B\!=\!+1$~MW (long, helping). Adding a long consumer $i$ with $\Delta_i\!=\!+1$~MW offsets part of $A$'s shortfall, lowering $A$'s imbalance cost and giving a strictly negative $M_i(A)$. However, $i$ only deepens $B$'s already-helping long position, leaving $B$'s imbalance cost at zero and giving $M_i(B)\!=\!0$. Hence, $M_i(B)\!>\!M_i(A)$ despite $A\!\subseteq\!B$, so the game is not concave. 
This distinction is central to the paper: stability does not follow from concavity and must instead be checked for each allocation mechanism separately, which is the subject of Table~\ref{tab:mechanisms} in Section~\ref{sec:alloc}.

Recall the second question on how to allocate the coalition cost among the consumers. An allocation vector $\mathbf{x}^S\!=\!\{x_i^S\!\mid\!i\in\!S\}$ assigns a cost share $x_i^S$ to each consumer $i$ in coalition $S$. In particular, the \textit{grand-coalition allocation} is denoted by $\mathbf{x}^N\!=\!\{x_i^N\!\mid\!i\in\!N\}$. For notational simplicity, we omit the superscript $N$ when referring to the grand-coalition allocation and write $\mathbf{x}\!=\!\{x_i\!\mid\!i\in\!N\}$.
For an allocation to hold a coalition together, two minimal conditions must hold: no consumer should do worse than on its own, and the full coalition cost should be distributed.

\begin{definition}[Individual Rationality]
An allocation vector $x^S$ is individually rational if $x_i^S\leq c(\{i\})$ for all $i\in S$.
\end{definition}

\begin{definition}[Collective Rationality]
An allocation vector $x^S$ is collectively rational if $\sum_{i \in S} x_i^S\!=\!c(S)$.
\end{definition}

\begin{definition}[Imputation]
An allocation vector is an imputation if it is both individually and collectively rational.
\end{definition}

Individual and collective rationality are necessary but not sufficient: an imputation can still leave a subgroup of consumers better off on their own, in which case the grand coalition is \textit{unstable}. To detect this, we measure how much a coalition could gain by breaking away.

\begin{definition}[Excess]
The excess of a coalition $S\subseteq N$ under the grand-coalition allocation $\mathbf{x}$ is
\begin{equation}\label{excess1}
    \epsilon(S,\mathbf{x})=c(S)-\sum_{i \in S}x_{i}.
\end{equation}
\end{definition}

A negative excess means that a coalition can reduce its total charge relative to what it is charged under the grand-coalition allocation $\mathbf{x}$, and therefore has an incentive to deviate. Requiring the excess of every coalition to be non-negative gives rise to the central notion of \textit{stability} used throughout this paper.

\begin{definition}[Core]
The core is the set of imputations with $\epsilon(S,\mathbf{x})\geq 0$ for all $S \subseteq N$:
\begin{equation}
C = \left\{ \mathbf{x} \,\middle|\, \sum_{i \in N} x_i = c(N), \; \epsilon(S,\mathbf{x}) \geq 0, \; \forall S \subseteq N \right\}.
\end{equation}
\end{definition}
Intuitively, the core is the feasible region of allocations that distribute the entire grand-coalition value and under which no coalition, from a single consumer to any subgroup, can do better by breaking away. An allocation in the core is therefore stable, since no group has an incentive to leave; we adopt such stability as our notion of a well-behaved allocation. The core can be empty, and even when it is non-empty, computing an allocation inside it is non-trivial. This is the task of an \textit{allocation mechanism}, which maps the coalitions' characteristic functions to a single allocation. 

\vspace{-2mm}
\section{Imbalance cost allocation mechanisms}\label{sec:alloc}
An allocation mechanism is the third step of the pipeline in Fig.~\ref{fig:flowchart}: it maps the characteristic function to a single allocation $\mathbf{x}$ that divides the grand-coalition cost among consumers. We study six allocation mechanisms: the Shapley value mechanism, the marginal cost contribution mechanism, the \ac{vcg} mechanism, the nucleolus mechanism, the marginal price allocation mechanism, and finally the Gately point mechanism. Hereafter, allocations produced by these mechanisms are denoted using superscripts. For example, the Shapley value allocation is denoted by $\mathbf{x}^{\rm{SV}}\!=\!\{x^{\rm{SV}}_i\!\mid\!i\in\!N\}$.

\vspace{-3mm}
\subsection{Shapley value} 
The Shapley value \cite{Shapley1953} assigns each consumer $i$ a weighted average of its marginal cost contribution $M_i(S)\!=\!c(S\cup\{i\})\!-\!c(S)$ over all coalitions $S$ not containing $i$:
\begin{equation}
    x^{\rm{SV}}_i=\sum_{S\subseteq N\setminus\{i\}}\frac{|S|!\,(|N|-|S|-1)!}{|N|!}\,M_i(S), \ \ \forall i\in N,
\end{equation}
where $|S|$ and $|N|$ denote the number of consumers in coalition $S$ and the grand coalition $N$, respectively.
The Shapley value is the unique allocation satisfying the standard fairness axioms of cooperative game theory \cite{Shapley1953}. For concave games, the Shapley value belongs to the core \cite{Shapley1971}; however, recall from Section \ref{sec:GameProperties} that the imbalance netting game is non-concave. Therefore, this guarantee does not apply in the present setting. Moreover, computing the Shapley value requires evaluating the imbalance cost for every possible coalition, resulting in $2^{|N|}$ coalition cost evaluations.

\vspace{-2mm}
\subsection{Marginal Cost Contribution (MCC)}\label{sec:mcc} This mechanism charges each consumer $i$ its marginal cost contribution to the grand coalition, as defined in Definition \ref{MCC_def}:
\begin{equation}
    x^{\rm{MCC}}_i = M_i = c(N) - c(N\setminus\{i\}), \qquad \forall i\in N,
    \label{eq:MCC}
\end{equation}
that is, the additional imbalance cost incurred by the grand coalition due to the inclusion of consumer $i$. Computing the MCC allocation requires evaluating the imbalance costs of only the $|N|$ leave-one-out coalitions and the grand coalition, for a total of $|N|+1$ coalition cost evaluations. This is computationally much more efficient than the Shapley value.

\vspace{-2mm}
\subsection{VCG}\label{sec:VCG}
The \ac{vcg} mechanism \cite{Vickrey1961,Clarke1971,Groves1973} assigns to each consumer $i$ the externality that its participation imposes on the \emph{remaining} consumers in the grand coalition:
\begin{subequations}
\begin{align}
    x^{\rm{VCG}}_i &= \sum_{t \in \mathcal{T}} \lambda_{t,N}\,\Delta_{t,N\setminus\{i\}} - c(N\setminus\{i\}), \ \ \forall i\in N,
    \label{eq:VCG_others_cost}
\end{align}
where $c(N\setminus\{i\}) = \sum_{t \in \mathcal{T}} \lambda_{t,N\setminus\{i\}}\,\Delta_{t,N\setminus\{i\}}$ 
is the imbalance cost incurred by the remaining consumers when consumer $i$ is absent. The first term represents the imbalance cost of the remaining consumers when consumer i is present, obtained by settling their imbalance $\Delta_{t,N\setminus\{i\}}$ at the grand-coalition spread $\lambda_{t,N}$\footnote{Standard VCG requires the cost borne by the remaining consumers when consumer i participates. We price the remaining consumers' imbalance $\Delta_{t,N\setminus i}$ at the grand coalition spread $\lambda_{t,N}$ as a modeling choice. This equals the marginal price allocation \ref{eq:mp}, which yields the identity \ref{eq:vcg-mcc-relations}.}. Equivalently,
\begin{align}
    x^{\rm{VCG}}_i &= \sum_{t \in \mathcal{T}} \bigl(\lambda_{t,N}-\lambda_{t,N\setminus\{i\}}\bigr)\,\Delta_{t,N\setminus\{i\}}, \ \ \forall i\in N,
    \label{eq:VCG_others_cost1}
\end{align}
\end{subequations}
showing that consumer $i$ receives a non-zero allocation only in hours where its participation changes the imbalance price spread faced by the remaining consumers, that is, where $\lambda_{t,N}\neq\lambda_{t,N\setminus\{i\}}$. Under the price-taking assumption of the facilitator (Remark~\ref{remark1}), this occurs only when the participation of consumer $i$ changes the sign of the portfolio's  imbalance. We refer to such a consumer as a \textit{pivotal consumer}; otherwise, consumer $i$'s allocation is zero. As shown in Proposition \ref{prop:mcc_vcg_group_rational}, VCG is never positive for this game. In practice, this means that most consumers receive a zero allocation while those that receive a non-zero allocation are paid, leaving the facilitator with no income and a budget deficit, which is a well-known drawback of the VCG mechanism \cite{VCG13}. As with the MCC mechanism, the VCG mechanism requires evaluating only the grand coalition and the $|N|$ leave-one-out coalitions, for a total of $|N|+1$ coalition cost evaluations.


\vspace{-2mm}
\subsection{Nucleolus} The nucleolus \cite{Scmeidler1969} selects the imputation that lexicographically maximizes the vector of coalition excesses $\epsilon(S,\mathbf{x})$ sorted in non-decreasing order. With the excess definition in \eqref{excess1}, the smallest excess corresponds to the coalition with the strongest incentive to deviate. The nucleolus therefore first makes this smallest excess as large as possible, then the second smallest, and so on. It has no closed form and is computed through a sequence of linear programs involving constraints for all coalitions. Since all $2^{|N|}$ coalition values are required as constraints, the method becomes computationally challenging for large portfolios. Whenever the core is non-empty, the nucleolus lies in it, so we use it as a stability benchmark.

\vspace{-2mm}
\subsection{Marginal Price}\label{sec:mp} The marginal price mechanism \cite{Guo2021}, a uniform-price scheme related to the shadow price imputation of \cite{Zuluaga2025}, charges each consumer its own imbalance at the grand-coalition spread each hour:
\begin{equation}\label{eq:mp}
    x^{\rm{MP}}_i=\sum_{t\in \mathcal{T}}\lambda_{t,N}\,\Delta_{t,i}, \qquad \forall i\in N.
\end{equation}

Unlike the two-price imbalance settlement, where each consumer's imbalance is settled according to whether it helps or harms the power system, this mechanism settles every consumer's imbalance at the single grand-coalition imbalance price spread, $\lambda_{t,N}$, regardless of the consumer's own imbalance direction. It therefore constitutes an internal one-price settlement scheme. In time periods where the facilitator's portfolio helps the power system, $\lambda_{t,N}\!=\!0$, so no consumer is charged for that period. Since the imbalance cost is computed only for the grand coalition, this mechanism is computationally the most efficient among those considered.

\vspace{-3mm}
\subsection{Gately point}

The Gately point\cite{Gately1974} allocates according to each consumer's \emph{propensity to disrupt}, 
seeking to balance departure incentives across consumers. In a cost-minimization setting, 
the propensity to disrupt of consumer $i$ under an imputation $x$ is defined as:

\begin{subequations}\label{eq:gately}
\begin{equation}\label{eq:gately-d}
d_i(\mathbf{x})=\frac{\sum_{j\neq i}x_j-c(N\setminus\{i\})}{x_i-c(\{i\})},
\end{equation}
where the numerator is the negative of the cost savings others gain from $i$'s  participation, and the denominator is the negative of the cost savings $i$ gains  from being in the coalition. Equivalently, $d_i$ is the ratio of cost savings for  the others to cost savings for $i$ itself. When both benefit from aggregation, this  ratio is positive ($d_i > 0$), indicating that $i$ is a valuable contributor, i.e., the  others benefit from including $i$ relative to $i$'s own benefit.

The Gately point is the imputation that minimizes the largest propensity to disrupt across consumers,
\begin{equation}\label{eq:gately-dstar}
    \mathbf{x}^{\rm{GP}} = \arg\min_{\mathbf{x}^{\rm{GP}} \in \mathcal{X}} \, \max_{i\in N} \, d_i(\mathbf{x}),
\end{equation}
where $\mathcal{X}$ is the set of imputations of $N$. This minimization balances the incentives to depart. It admits a closed form \cite{Littlechild1976} at the allocation where all propensities to disrupt are equal, with common value
\begin{equation}\label{equal}
    d^*=\frac{\sum_{j\in N}M_j-c(N)}{c(N)-\sum_{j\in N}c(\{j\})},
\end{equation}
which, substituted into \eqref{eq:gately-d}, gives each consumer's allocation
\begin{equation}\label{eq:gat_alloc}
    x^{\rm{GP}}_i=\frac{d^*\,c(\{i\})+M_i}{d^*+1}, \qquad \forall i\in N.
\end{equation}
\end{subequations}

The Gately point requires $2|N|+1$ coalition costs, namely $c(N)$, the $|N|$ leave-one-out 
costs $c(N\!\setminus\!\{i\})$, and the $|N|$ singletons $c(\{i\})$, so it scales linearly 
with portfolio size.

\vspace{-2mm}
\subsection{Properties and Comparison}\label{sec:properties_and_comparison}
We compare mechanisms on four properties \cite{MITRIDATI2021102177,ACourseInCooperativeGameTheory}. \textit{Computational complexity} is the number of coalition costs $c(S)$ that must be evaluated, together with the effort to form the allocation from them, as a function of portfolio size. \textit{Budget balance} requires the consumer charges to sum exactly to $c(N)$, leaving the facilitator with neither surplus nor deficit. \textit{Group rationality} requires every coalition to be at least as well off in the grand coalition as by splitting off, so that no subgroup has an incentive to leave. \textit{Additivity} requires the allocation of the whole game to equal the sum of the allocations of its per-period, e.g., hourly, subgames. \textit{Incentive compatibility} is not compared: by Remark~\ref{forc} all forecasts are made centrally by the facilitator, so consumers cannot misreport. Together with the assumption of inelastic demand, centralized forecasting removes strategic reporting or dispatch decisions by consumers; incentive compatibility is therefore not an active consideration in the present model.

Table~\ref{tab:mechanisms} summarizes the six mechanisms with respect to these properties. The Shapley value and the nucleolus both require the costs of all $2^{|N|}$ coalitions. In addition, computing the nucleolus may require solving up to $2^{|N|}-1$ linear programs, rendering both mechanisms computationally intractable as the portfolio grows. Moreover, because the game is non-concave, the Shapley value is not guaranteed to lie in the core.

VCG is computationally inexpensive but, except in the degenerate case $c(N)=0$, is not budget-balanced in this game and yields a budget deficit in favor of the consumers. This deficit follows directly from additive bidding. Under the per-consumer newsvendor bidding described in Section~\ref{sec:bidding}, imbalances are additive across consumers, so $\Delta_{t,N\setminus\{i\}}\!=\!\Delta_{t,N}\!-\!\Delta_{t,i}$ for every $t\!\in\!\mathcal{T}$. Substituting this identity into~\eqref{eq:VCG_others_cost} and using $c(N)=\sum_t\lambda_{t,N}\Delta_{t,N}$ from~\eqref{eq:charfun} yields
\begin{subequations}\label{eq:vcg-mcc-relations}
\begin{equation}
x_i^{\rm{VCG}} = x_i^{\rm{MCC}} - x_i^{\rm{MP}}, \qquad \forall i \in N,
\label{eq:vcg-mcc-mp}
\end{equation}
where $x_i^{\rm{MCC}}$ is the marginal cost contribution defined in~\eqref{eq:MCC}, and $x_i^{\rm{MP}}$ is the marginal price allocation defined in~\eqref{eq:mp}. Because the marginal price allocation is budget-balanced, summing~\eqref{eq:vcg-mcc-mp} over all consumers gives
\begin{equation}
\sum_{i \in N} x_i^{\rm{VCG}} = \sum_{i \in N} x_i^{\rm{MCC}} - c(N).
\label{eq:vcg-mcc-sum}
\end{equation}
\end{subequations}

Consequently, the total amount collected under VCG is lower than that collected under MCC by exactly $c(N)$. Equivalently, the VCG budget deficit exceeds the MCC budget deficit by exactly $c(N)$. This result is consistent with VCG assigning non-zero transfers only to pivotal consumers in this setting. The identity follows from the linear, hourly price-spread structure of the imbalance netting game in~\eqref{eq:charfun} and does not hold for cooperative games in general.

\begin{table}[t!]
\centering
\caption{\small Studied mechanisms and their properties. Results specific to the imbalance netting game are cited where applicable.}
\label{tab:mechanisms}
\footnotesize
\resizebox{\columnwidth}{!}{
    \begin{threeparttable}
    \setlength{\tabcolsep}{4pt}
    \renewcommand{\arraystretch}{1.25}
    \begin{tabular}{@{}lcccc@{}}
    \toprule
    Mechanism & Complexity & \begin{tabular}[b]{@{}c@{}}Budget\\ balanced\end{tabular} & \begin{tabular}[b]{@{}c@{}}Group\\ rational\end{tabular} & Additive \\
    \midrule
    Shapley value~\cite{Shapley1953}  & $2^{|N|}$   & \checkmark          & --\,\tnote{a}          & \checkmark \\
    MCC\tnote{e}   & $|N|{+}1$   & --\,\tnote{b}       & \checkmark~(Prop..~\ref{prop:mcc_vcg_group_rational})   & \checkmark~(Lem.~\ref{lem:MCCAdd}) \\
    VCG~\cite{Vickrey1961,Clarke1971,Groves1973} & $|N|{+}1$   & --\,\tnote{c}       & \checkmark~(Prop.~\ref{prop:mcc_vcg_group_rational})   & \checkmark~(Cor. D.3) \\
    Nucleolus~\cite{Scmeidler1969}      & $2^{|N|}$   & \checkmark          & \checkmark             & --  \\
    Marginal price~\cite{Guo2021,Zuluaga2025} & $1$         & \checkmark          & \checkmark\,\tnote{d}  & \checkmark \\
    Gately point~\cite{Gately1974,Littlechild1976}   & $2|N|{+}1$  & \checkmark          & --            & --  \\
    \bottomrule
    \end{tabular}
    \begin{tablenotes}[flushleft]
    \footnotesize
    \item[a] Only guaranteed to be in the core for concave games \cite{Shapley1971}.
    \item[b] Budget balanced if and only if \eqref{eq:BudCon1} holds (Theorem~\ref{the:MCCFinal}); otherwise the facilitator is left in deficit.
    \item[c] Never budget balanced for this game: the deficit always favors consumers. See Sections \ref{sec:VCG} and \ref{sec:properties_and_comparison}.
    \item[d] Established for this class of games in \cite{Guo2021}.
    \item[e] MCC is a marginal contribution rule; its complexity and properties are derived directly rather than cited to prior work.
    \end{tablenotes}
    \end{threeparttable}
}
\vspace{-2mm}
\end{table}

\section{Analytical Results}\label{sec:analytical}
This section presents three analytical results for the imbalance netting game. First, we derive a necessary and sufficient condition for MCC to be budget-balanced, supporting the corresponding entry in Table~\ref{tab:mechanisms}. Second, we establish group rationality for MCC and VCG. Third, we characterize the existence of the Gately point and express it as a combination of MCC and singleton costs. These results constitute the analytical contributions of this paper.

\subsection{Budget Balance of MCC}\label{sec:VCGBB}
MCC is generally not budget-balanced, and whether the budget imbalance favors the participants or the mechanism operator depends on the setting. We show that for the imbalance netting game it always favors the consumers: the charges collected fall short of the imbalance cost, leaving the facilitator in deficit.

The MCC allocation charges each consumer its marginal contribution to the imbalance cost. Budget balance therefore holds exactly when the consumers' marginal cost contributions sum to the grand-coalition cost $c(N)$, a property we call \textit{additivity of marginal contributions at the top level }(AMC).
\begin{definition}[AMC]
    A game exhibits AMC if the grand-coalition cost equals the sum of the consumers' marginal cost contributions:
\begin{equation}\label{AMC}
    c(N)=\sum_{i\in N}M_i.
\end{equation}
\end{definition}

We first show that MCC is additive over subgames. Let $\mathrm{MCC}(c)\!=\!\bigl(x^{\rm{MCC}}_1,\dots,x^{\rm{MCC}}_{|N|}\bigr)$ denote the allocation vector that MCC assigns in a game $c$, with $x^{\rm{MCC}}_i\!=\!c(N)\!-\!c(N\!\setminus\!\{i\})$.

\begin{lemma}\label{lem:MCCAdd}
Write the game as a sum of hourly subgames, $c = \sum_{t\in\mathcal{T}} c_t$, where $c_t$ is the game for imbalance realization $t$.\footnote{This decomposition relies on the game being temporally separable: from \eqref{eq:charfun}, $c(S)=\sum_{t\in\mathcal{T}}\lambda_{t,S}\Delta_{t,S}$ is a sum of independent per-hour terms, so each hour forms its own subgame. Intertemporal coupling, such as storage carrying state of charge across hours, would prevent this decomposition and the additivity in question would no longer apply.} MCC is additive over this decomposition:
\begin{equation}
    \mathrm{MCC}(c) = \sum_{t \in \mathcal{T}} \mathrm{MCC}(c_t),
\end{equation}
where the sum is taken componentwise over consumers.
\end{lemma}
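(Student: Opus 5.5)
The argument rests on one structural fact: the MCC map is \emph{linear} in the characteristic function. Once that is recorded, additivity over any decomposition of $c$ follows immediately. We therefore first make the hourly decomposition precise and then verify linearity coordinatewise.

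\textbf{Step 1: make the subgames explicit.} For each $t\in\mathcal{T}$ define $c_t(S)=\lambda_{t,S}\Delta_{t,S}$ for all $S\subseteq N$. Here $\lambda_{t,S}$ is given by \eqref{spread} and depends only on the sign of $\Delta_{t,S}$ and on the hour-$t$ prices. Each $c_t$ is therefore a well-defined characteristic function on $2^N$, determined entirely by the hour-$t$ data. By \eqref{eq:charfun}, $c(S)=\sum_{t\in\mathcal{T}}c_t(S)$ holds for every coalition $S$, including $N$ and each $N\setminus\{i\}$. This is the only place where the temporal separability noted in the footnote is used.

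\textbf{Step 2: linearity of MCC.} Fix a consumer $i$. The $i$-th component of $\mathrm{MCC}(c)$ is $c(N)-c(N\setminus\{i\})$. Substituting the decomposition from Step~1 and regrouping the finite sums gives
\begin{equation*}
c(N)-c(N\setminus\{i\})=\sum_{t\in\mathcal{T}}\bigl(c_t(N)-c_t(N\setminus\{i\})\bigr).
\end{equation*}
Each summand on the right is exactly the $i$-th component of $\mathrm{MCC}(c_t)$. Since this holds for every $i\in N$, the identity holds componentwise, which is the claim of Lemma~\ref{lem:MCCAdd}.

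\textbf{Where care is needed.} Neither step is hard; the only point requiring attention is Step~1. We must confirm that the subgame $c_t$ evaluated at a coalition uses that coalition's own spread $\lambda_{t,S}$ and not the grand-coalition spread. Otherwise $c_t(N\setminus\{i\})$ would not match the hour-$t$ term of $c(N\setminus\{i\})$.

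This is guaranteed because the coalition imbalance $\Delta_{t,S}$ is defined per hour by \eqref{eq:imbalance}, and bids are additive, so no coalition-level intertemporal optimization couples the hours. No sign or concavity properties of the game are needed: the argument is purely linear-algebraic.
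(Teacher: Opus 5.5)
Your proof is correct and follows essentially the same route as the paper's Appendix~\ref{AppA}. Both define the hourly subgame $c_t(S)=\lambda_{t,S}\Delta_{t,S}$, use \eqref{eq:charfun} to write $c(S)=\sum_{t\in\mathcal{T}}c_t(S)$, and regroup $c(N)-c(N\setminus\{i\})$ into $\sum_{t\in\mathcal{T}}\bigl(c_t(N)-c_t(N\setminus\{i\})\bigr)$ for each consumer $i$.
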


\begin{proof} See Appendix~\ref{AppA}. \end{proof}

The additive decomposition lets budget balance be checked one hour at a time and then summed over the horizon. We first give a sufficient condition.

\begin{lemma}\label{lem:BBholds}
AMC, and hence budget balance of MCC, holds if
\begin{equation}\label{eq:BudCon1}
    \lambda_{t,N} = \lambda_{t,N\setminus\{i\}}, \qquad \forall i\in N,\ \forall t\in \mathcal{T},
\end{equation}
that is, if removing any single consumer $i$ leaves the grand-coalition price spread $\lambda_{t,N}$ of \eqref{spread} unchanged in every hour.
\end{lemma}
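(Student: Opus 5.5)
The plan is a direct hour-by-hour computation. I use Lemma~\ref{lem:MCCAdd} to work with the hourly subgames $c_t(S)=\lambda_{t,S}\Delta_{t,S}$, and I use the additivity of imbalances implied by the summed singleton bids of Section~\ref{sec:bidding}, namely $\Delta_{t,N\setminus\{i\}}=\Delta_{t,N}-\Delta_{t,i}$. Under condition \eqref{eq:BudCon1}, the marginal cost contribution of each consumer should collapse to its marginal price charge \eqref{eq:mp}. Summing over consumers then recovers $c(N)$.

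\textbf{Step 1: write $M_i$ as a sum over hours.} By \eqref{eq:charfun},
\begin{equation*}
M_i = c(N)-c(N\setminus\{i\}) = \sum_{t\in\mathcal{T}}\bigl(\lambda_{t,N}\Delta_{t,N}-\lambda_{t,N\setminus\{i\}}\Delta_{t,N\setminus\{i\}}\bigr).
\end{equation*}

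\textbf{Step 2: substitute the hypothesis.} Replacing $\lambda_{t,N\setminus\{i\}}$ by $\lambda_{t,N}$, as allowed by \eqref{eq:BudCon1}, and using imbalance additivity gives
\begin{equation*}
M_i=\sum_{t\in\mathcal{T}}\lambda_{t,N}\bigl(\Delta_{t,N}-\Delta_{t,N\setminus\{i\}}\bigr)=\sum_{t\in\mathcal{T}}\lambda_{t,N}\Delta_{t,i}=x_i^{\rm MP}.
\end{equation*}

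\textbf{Step 3: sum over consumers.} Summing over $i\in N$ and exchanging the finite sums, additivity gives $\sum_i\Delta_{t,i}=\Delta_{t,N}$. Hence
\begin{equation*}
\sum_{i\in N}M_i=\sum_{t\in\mathcal{T}}\lambda_{t,N}\Delta_{t,N}=c(N).
\end{equation*}
This is exactly AMC \eqref{AMC}, and therefore budget balance of MCC. As a consistency check, \eqref{eq:vcg-mcc-mp} then gives $x_i^{\rm VCG}=0$ for all $i$, which matches the absence of pivotal consumers.

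\textbf{Expected obstacle.} The argument is essentially bookkeeping, so the main point of care is the edge cases. If $|N|=1$, the coalition $N\setminus\{i\}$ is empty, so $\Delta_{t,\emptyset}=0$ and $c(\emptyset)=0$. In that case the term vanishes whatever value the convention assigns to $\lambda_{t,\emptyset}$, and the identity still holds. A second point to state explicitly is that the substitution in Step 2 depends on the additive bidding construction. With coalition-level quantile bids, $\Delta_{t,N\setminus\{i\}}$ would not equal $\Delta_{t,N}-\Delta_{t,i}$, and the cancellation in Step 3 would fail.
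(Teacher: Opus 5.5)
Your proposal is correct and matches the paper's proof: under \eqref{eq:BudCon1}, imbalance additivity reduces each marginal contribution to $\lambda_{t,N}\Delta_{t,i}$, and summing over consumers recovers $c(N)$. The paper works one hour at a time and then invokes Lemma~\ref{lem:MCCAdd}, whereas you sum over hours directly and observe that $M_i = x_i^{\rm MP}$; this is only a bookkeeping difference.
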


\begin{proof}
    See Appendix~\ref{AppB}.
\end{proof}

The condition can fail, and when it does the failure is one-directional.

\begin{lemma}\label{lem:BBbreaks}
    If \eqref{eq:BudCon1} does not hold, MCC returns a non-budget-balanced allocation that always leaves the facilitator in budget deficit.
\end{lemma}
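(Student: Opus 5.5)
The plan is to reduce the budget gap of MCC to a sum of hourly, per-consumer terms and show that each term is non-positive, with at least one strictly negative whenever \eqref{eq:BudCon1} fails.

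\textbf{Step 1 (expressing the gap).} From the identity \eqref{eq:vcg-mcc-sum}, which uses additivity of imbalances and budget balance of the marginal price allocation,
\begin{equation*}
\sum_{i\in N} x_i^{\rm{MCC}} - c(N) = \sum_{i\in N} x_i^{\rm{VCG}} = \sum_{i\in N}\sum_{t\in\mathcal{T}} \bigl(\lambda_{t,N}-\lambda_{t,N\setminus\{i\}}\bigr)\,\Delta_{t,N\setminus\{i\}},
\end{equation*}
where the last equality is \eqref{eq:VCG_others_cost1}. Lemma~\ref{lem:MCCAdd} justifies treating this hour by hour. So it suffices to show that every summand $\bigl(\lambda_{t,N}-\lambda_{t,N\setminus\{i\}}\bigr)\Delta_{t,N\setminus\{i\}}$ is $\le 0$. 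This also gives the sign of any imbalance: the facilitator collects at most $c(N)$.

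\textbf{Step 2 (a max representation of the hourly cost).} Set $a_t=\lambda^{\mathrm{DA}}_t-\lambda^+_t\le 0$ and $b_t=\lambda^{\mathrm{DA}}_t-\lambda^-_t\ge 0$. These signs hold because the harming price is never better than day-ahead and the helping price equals day-ahead. Then \eqref{spread} gives
\begin{equation*}
\lambda_{t,S}\Delta_{t,S}=\max\{a_t\Delta_{t,S},\,b_t\Delta_{t,S}\} \quad \text{for every } S .
\end{equation*}
Indeed, for $\Delta\le0$ the larger product is $a_t\Delta$, and for $\Delta>0$ it is $b_t\Delta$. Hence the spread a coalition actually faces maximizes its hourly cost over the two admissible spreads. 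Applied to $S=N\setminus\{i\}$, any other admissible spread, in particular $\lambda_{t,N}\in\{a_t,b_t\}$, yields
\begin{equation*}
\lambda_{t,N}\Delta_{t,N\setminus\{i\}}\le \lambda_{t,N\setminus\{i\}}\Delta_{t,N\setminus\{i\}} .
\end{equation*}
This is exactly non-positivity of each summand. Equivalently, by case split: if $\Delta_{t,N\setminus\{i\}}\le 0<\Delta_{t,N}$ the term equals $(b_t-a_t)\Delta_{t,N\setminus\{i\}}\le0$, and symmetrically in the other sign-switch case. When the spreads coincide, the term is zero.

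\textbf{Step 3 (strictness, the main obstacle).} If \eqref{eq:BudCon1} fails, there exist $t$ and $i$ with $\lambda_{t,N}\neq\lambda_{t,N\setminus\{i\}}$. This forces $a_t\neq b_t$ and opposite sign classes of $\Delta_{t,N}$ and $\Delta_{t,N\setminus\{i\}}$. The summand is then $-(b_t-a_t)\lvert\Delta_{t,N\setminus\{i\}}\rvert$, which is strictly negative provided $\Delta_{t,N\setminus\{i\}}\neq0$. The delicate case is $\Delta_{t,N\setminus\{i\}}=0$ exactly. There the spread label changes but contributes no cost, so no deficit arises from that term.

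I would handle this by noting that when $\Delta_{t,S}=0$ the spread assigned to $S$ is immaterial to $c$. The condition \eqref{eq:BudCon1} should therefore be read modulo such zero-imbalance ties, that is, required only where $\Delta_{t,N\setminus\{i\}}\neq0$. I would state this convention explicitly in the proof. With it, any genuine violation produces a strictly negative term. Since all other terms are $\le0$, summing gives $\sum_i x_i^{\rm{MCC}}<c(N)$: MCC is not budget-balanced and the facilitator is in deficit.
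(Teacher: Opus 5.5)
Your proof is correct and takes a genuinely different route from the paper's. The paper works hour by hour with an explicit case split. It first asserts that every $i\in\delta_t$ has $\Delta_{t,i}$ of the same sign as $\Delta_{t,N}$ and strictly larger in magnitude. It then computes $\sum_i M_{t,i}$ separately when the grand coalition harms (each pivotal consumer is charged the whole hourly cost, $M_{t,i}=c_t(N)$) and when it helps (each pivotal consumer receives $M_{t,i}=-c_t(N\setminus\{i\})<0$). You instead use \eqref{eq:vcg-mcc-sum}, which is derived earlier without this lemma, so there is no circularity, to write the MCC gap as the total VCG transfer. You then bound every hourly term at once via $\lambda_{t,S}\Delta_{t,S}=\max\{a_t\Delta_{t,S},b_t\Delta_{t,S}\}$. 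In effect, the hourly cost is convex in the imbalance and $\lambda_{t,N}$ is a subgradient at $\Delta_{t,N}$.

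Your route buys four things. There is no case analysis. The termwise inequality $M_{t,i}\le\lambda_{t,N}\Delta_{t,i}$, which is exactly what Lemma~\ref{lem:mcc-group-rational} and the sign claim $x_i^{\mathrm{VCG}}\le 0$ need, comes out directly rather than being extracted from inside the deficit proof. You get the exact deficit $-\sum(b_t-a_t)|\Delta_{t,N\setminus\{i\}}|$ over sign-flipping pairs $(t,i)$. And the boundary case $\Delta_{t,N}=0$ in a short system is covered automatically; there $c_t(N)=0$ although $\lambda_{t,N}=a_t$ may be nonzero, so it fits neither of the paper's two cases as written. In exchange, the paper's computation gives the concrete pivotal-consumer picture used in the numerical section.

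Your caution in Step 3 is justified, and it is not a defect of your route: the statement as literally written fails in the tie case. Take one hour with a long system ($a_t=0<b_t$) and two consumers with $\Delta_{t,1}=1$, $\Delta_{t,2}=0$. Then $\lambda_{t,N}=b_t\neq 0=\lambda_{t,N\setminus\{1\}}$, so \eqref{eq:BudCon1} fails, yet $M_1+M_2=b_t+0=c(N)$. The paper's proof misses this because a flip from $\Delta_{t,N}>0$ to $\Delta_{t,N\setminus\{i\}}\le 0$ only gives $\Delta_{t,i}\ge\Delta_{t,N}$, not the strict inequality it asserts. Keep your convention, but state it as an explicit hypothesis. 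Better still, replace \eqref{eq:BudCon1} by the sharp criterion your argument actually proves: MCC is budget-balanced if and only if $(\lambda_{t,N}-\lambda_{t,N\setminus\{i\}})\Delta_{t,N\setminus\{i\}}=0$ for all $t\in\mathcal{T}$ and $i\in N$.
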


\begin{proof}
    See Appendix~\ref{AppC}.
\end{proof}

Together, Lemmas~\ref{lem:BBholds} and \ref{lem:BBbreaks} characterize budget balance for MCC.

\begin{theorem}\label{the:MCCFinal}
The imbalance netting game under the MCC mechanism is budget-balanced if and only if \eqref{eq:BudCon1} holds. Otherwise, the consumer charges do not cover the imbalance cost the facilitator bears, leaving it in deficit:
\begin{equation} \label{xxc}
    \sum_{i \in N} x^{\rm{MCC}}_i < c(N).
\end{equation}
\end{theorem}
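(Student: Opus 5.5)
The plan is to combine Lemmas~\ref{lem:BBholds} and \ref{lem:BBbreaks} through an explicit per-hour formula for the budget gap $\sum_{i\in N}x^{\rm MCC}_i - c(N)$. The \emph{if} direction is exactly Lemma~\ref{lem:BBholds}: under \eqref{eq:BudCon1} the game satisfies AMC \eqref{AMC}, and since $x^{\rm MCC}_i=M_i$, AMC is literally budget balance. For the \emph{only if} direction and the strict inequality \eqref{xxc}, I would not just cite Lemma~\ref{lem:BBbreaks} as a black box. Instead I would re-derive the sign structure so that the contrapositive is transparent.

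First, write the gap in a form that can be checked hour by hour. By Lemma~\ref{lem:MCCAdd}, the gap is a sum over $t\in\mathcal T$ of hourly gaps. By identity \eqref{eq:vcg-mcc-sum}, the gap equals $\sum_{i}x^{\rm VCG}_i$, and \eqref{eq:VCG_others_cost1} then gives
\begin{equation*}
\sum_{i\in N}x^{\rm MCC}_i - c(N)=\sum_{t\in\mathcal T}\sum_{i\in N}\bigl(\lambda_{t,N}-\lambda_{t,N\setminus\{i\}}\bigr)\Delta_{t,N\setminus\{i\}} .
\end{equation*}
This reduces the whole question to the sign of each summand $g_{t,i}:=(\lambda_{t,N}-\lambda_{t,N\setminus\{i\}})\Delta_{t,N\setminus\{i\}}$.

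Second, show $g_{t,i}\le 0$ for every pair $(t,i)$ by case analysis on \eqref{spread}, using the structure noted after it. Either spread is zero when the coalition helps, and $\lambda_{t,S}\Delta_{t,S}\ge 0$ always. If $\lambda_{t,N}=\lambda_{t,N\setminus\{i\}}$, then $g_{t,i}=0$. Otherwise, by Remark~\ref{remark1}, the sign of the aggregate imbalance flips between $N\setminus\{i\}$ and $N$, so one of the two coalitions helps and the other harms. There are two cases.
\begin{itemize}
\item[(a)] $N\setminus\{i\}$ harms and $N$ helps. Then $\lambda_{t,N}=0$, so $g_{t,i}=-\lambda_{t,N\setminus\{i\}}\Delta_{t,N\setminus\{i\}}\le 0$.
\item[(b)] $N$ harms and $N\setminus\{i\}$ helps. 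Then $\lambda_{t,N\setminus\{i\}}=0$, and $\Delta_{t,N\setminus\{i\}}$ has the sign opposite to $\Delta_{t,N}$ (or is zero). Hence $\lambda_{t,N}\Delta_{t,N\setminus\{i\}}\le 0$, because $\lambda_{t,N}\Delta_{t,N}\ge 0$.
\end{itemize}
Summing gives $\sum_i x^{\rm MCC}_i\le c(N)$ unconditionally, which already shows the imbalance can only run against the facilitator.

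Third, the main obstacle is strictness: showing that a violation of \eqref{eq:BudCon1} at some $(t,i)$ makes $g_{t,i}<0$, not merely $\le 0$. In both cases above, $g_{t,i}$ equals minus the nonzero spread times $|\Delta_{t,N\setminus\{i\}}|$, so it is strictly negative whenever $\Delta_{t,N\setminus\{i\}}\neq 0$. The boundary case, where $\lambda_{t,N}\neq\lambda_{t,N\setminus\{i\}}$ but $\Delta_{t,N\setminus\{i\}}=0$, is handled as follows. The convention $\Delta\le 0\Rightarrow$ ``up-regulation branch'' in \eqref{spread} is what makes the spread jump at zero. I would argue that this configuration contributes nothing to the gap, and I would read \eqref{eq:BudCon1} as required only at hours with $\Delta_{t,N\setminus\{i\}}\neq 0$; alternatively, I would note that this is a measure-zero tie excluded in the appendix argument for Lemma~\ref{lem:BBbreaks}. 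With that convention fixed, any violation produces at least one strictly negative summand. Since all other summands are nonpositive, the total is strictly negative, which proves \eqref{xxc}. Failure of budget balance then forces \eqref{eq:BudCon1} to fail, completing the equivalence.
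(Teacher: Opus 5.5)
Your proposal is correct and is essentially the paper's proof. Your summand satisfies $g_{t,i}=M_{t,i}-\lambda_{t,N}\Delta_{t,i}$, so your cases (a) and (b) are Cases~2 and~1 of Appendix~C written termwise, which is the same termwise inequality the paper uses for Lemma~\ref{lem:mcc-group-rational}; one small fix is that in case (b) with $\Delta_{t,N}=0$ (short system), the sign of $\lambda_{t,N}$ must come from its branch in \eqref{spread}, not from $\lambda_{t,N}\Delta_{t,N}\ge 0$, which carries no sign information there.

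Your tie caveat is warranted rather than cosmetic: Appendix~C tacitly assumes $|\Delta_{t,i}|>|\Delta_{t,N}|$ and $c_t(N\setminus\{i\})>0$, and the literal ``only if'' fails (e.g., $\Delta_{t,1}=\Delta_{t,2}=1$, $\Delta_{t,3}=-1$ in a long system is exactly budget-balanced although \eqref{eq:BudCon1} fails), so you should commit to your first option of imposing \eqref{eq:BudCon1} only on pairs $(t,i)$ with $\Delta_{t,N\setminus\{i\}}\neq 0$, under which your argument proves a true equivalence.
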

\begin{proof}
    Lemma~\ref{lem:BBholds} gives budget balance under \eqref{eq:BudCon1}, and Lemma~\ref{lem:BBbreaks} gives the deficit when it fails.
\end{proof}
\subsection{Group Rationality of MCC and VCG}
\begin{proposition}\label{prop:mcc_vcg_group_rational}
    For every consumer $i\in N$, $x_i^{\rm{MCC}}\leq x_i^{\rm{MP}}$ and $x_i^{\rm{VCG}}\leq 0$. Hence both MCC and VCG are group rational as MCC always allocates a lower cost than a group rational mechanism and under VCG no consumer ever makes a positive net payment.
\end{proposition}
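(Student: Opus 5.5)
The whole argument rests on one per-hour inequality: \emph{settling any imbalance at the "wrong" branch of the two-price spread can only lower its cost.} Fix an hour $t$ and write the two branch spreads in \eqref{spread} as $a_t=\lambda^{\mathrm{DA}}_t-\lambda^+_t$ and $b_t=\lambda^{\mathrm{DA}}_t-\lambda^-_t$. Because the imbalance price never improves on the day-ahead price, $a_t\le 0\le b_t$. For any real $\Delta$:
\begin{itemize}
\item if $\Delta\le 0$, the correct spread is $a_t$, and $a_t\Delta\ge 0\ge b_t\Delta$;
\item if $\Delta>0$, the correct spread is $b_t$, and $b_t\Delta\ge 0\ge a_t\Delta$.
\end{itemize}
Hence, for every coalition $S$,
\begin{equation*}
\lambda_{t,S}\Delta_{t,S}=\max\{a_t\Delta_{t,S},\,b_t\Delta_{t,S}\},
\end{equation*}
and therefore $\mu\,\Delta_{t,S}\le\lambda_{t,S}\Delta_{t,S}$ for any $\mu\in\{a_t,b_t\}$. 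In particular this holds for $\mu=\lambda_{t,N}$. I would state this as a short auxiliary observation first, since it does all the work below.

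\textbf{Step 1: VCG is non-positive.} Apply the observation with $S=N\setminus\{i\}$ and $\mu=\lambda_{t,N}$:
\begin{equation*}
\lambda_{t,N}\Delta_{t,N\setminus\{i\}}\le\lambda_{t,N\setminus\{i\}}\Delta_{t,N\setminus\{i\}}.
\end{equation*}
Summing over $t\in\mathcal{T}$ and using \eqref{eq:VCG_others_cost} and \eqref{eq:charfun} gives $x_i^{\rm VCG}\le 0$ for every $i$. Equivalently, every summand of \eqref{eq:VCG_others_cost1} is non-positive.

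\textbf{Step 2: MCC is dominated by MP.} The identity \eqref{eq:vcg-mcc-mp}, which relies on additive bidding, gives $x_i^{\rm MCC}=x_i^{\rm VCG}+x_i^{\rm MP}$. Combined with Step 1, this yields $x_i^{\rm MCC}\le x_i^{\rm MP}$.

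\textbf{Step 3: group rationality.} Here group rationality is read as non-negative excess for every coalition, $\sum_{i\in S}x_i\le c(S)$; budget balance is handled separately by Theorem~\ref{the:MCCFinal}.
\begin{itemize}
\item \emph{MP.} I would not rely only on the citation to \cite{Guo2021}, but rederive it. By additivity, $\sum_{i\in S}x_i^{\rm MP}=\sum_t\lambda_{t,N}\Delta_{t,S}$. The same observation with $\mu=\lambda_{t,N}$ bounds this by $\sum_t\lambda_{t,S}\Delta_{t,S}=c(S)$.
\item \emph{MCC.} Summing Step 2 over $i\in S$ gives $\sum_{i\in S}x_i^{\rm MCC}\le\sum_{i\in S}x_i^{\rm MP}\le c(S)$.
\item \emph{VCG.} Step 1 gives $\sum_{i\in S}x_i^{\rm VCG}\le 0\le c(S)$, where $c(S)\ge 0$ was shown in Section~\ref{sec:cost}.
\end{itemize}

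\textbf{Expected obstacle.} The only delicate point is the sign bookkeeping at the boundary $\Delta=0$ and the sign conventions $a_t\le 0\le b_t$. These must be justified from the no-arbitrage property of the two-price settlement, namely $\lambda^+_t\ge\lambda^{\mathrm{DA}}_t\ge\lambda^-_t$. The rest is linear summation over hours, made legitimate by the temporal separability noted in Lemma~\ref{lem:MCCAdd}.
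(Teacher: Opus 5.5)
Your proof is correct, but it reverses the paper's order and rests on a different key fact.

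\emph{The paper's route.} It first proves $x_i^{\mathrm{MCC}}\le x_i^{\mathrm{MP}}$ termwise by reusing the pivotal-consumer case analysis of Appendix~\ref{AppC}. For $i\notin\delta_t$ it has $M_{t,i}=\lambda_{t,N}\Delta_{t,i}$. For $i\in\delta_t$, removing $i$ flips the remainder between helping (zero spread) and harming, which gives $M_{t,i}<\lambda_{t,N}\Delta_{t,i}$. It then cites \cite{Guo2021} for group rationality of MP, and only afterwards obtains $x_i^{\mathrm{VCG}}\le 0$ from \eqref{eq:vcg-mcc-mp}.

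\emph{Your route.} You observe that the hourly cost is $\max\{a_t\Delta,\,b_t\Delta\}$, so either branch spread prices any imbalance from below. This makes every summand of \eqref{eq:VCG_others_cost1} non-positive, yields MCC $\le$ MP through the identity, and proves group rationality of MP in one line instead of by citation.

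\emph{What your route buys.} It is shorter and self-contained. It uses only $a_t\le 0\le b_t$, not the fact that the helping branch has zero spread, so it would survive a settlement that penalizes both directions. It also needs no case split, which matters at boundary hours. The paper's two cases omit hours with $\Delta_{t,N}=0$ but $\lambda_{t,N}\neq 0$. And if removing $i$ leaves exactly zero imbalance, the paper's claimed strict inequality $M_{t,i}<\lambda_{t,N}\Delta_{t,i}$ degenerates to equality. This is harmless for the weak inequality needed here, and your argument never has to touch it.

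\emph{What the paper's route buys.} Economy of reuse: the same termwise computation also delivers the strict deficit of Theorem~\ref{the:MCCFinal}. Your formula recovers the pivotal-consumer picture as well, since the VCG summand vanishes whenever $\lambda_{t,N}=\lambda_{t,N\setminus\{i\}}$.
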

\begin{proof}
    See Appendix \ref{app:gr_mcc_vcg}.
\end{proof}

\subsection{Existence and Uniqueness of the Gately point}\label{sec:GatelyExistence}
The Gately point is not defined for every cooperative game, as it requires the game to be \textit{essential}~\cite{ACourseInCooperativeGameTheory}, meaning that cooperation yields a cost saving relative to consumers acting individually. This section characterizes when the Gately point exists for the imbalance netting game and proposes an alternative allocation for cases in which it does not.

\begin{definition}[Well-defined Gately point]
The Gately point is said to be \textit{well-defined} for a cooperative game if the imputation solving \eqref{eq:gately-dstar} exists and is unique, or equivalently that \eqref{equal} is well-defined. 
\end{definition}
For the Gately point itself to be well-defined, not every individual propensity to disrupt in \eqref{eq:gately-d} needs to be well-defined. A consumer for whom the individual propensity is undefined is instead charged its standalone cost \cite{Staudacher2019}.
\begin{lemma}
    For every $i \in N$, $c(\{i\}) \ge M_i$. Moreover, if the game is essential, i.e.,
    \[
        c(N) < \sum_{j\in N} c(\{j\}),
    \]
    then $c(\{i\}) > M_i$ for at least one $i \in N$.
\end{lemma}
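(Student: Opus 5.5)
The first claim, $c(\{i\})\ge M_i$, is the same as $c(N)\le c(N\setminus\{i\})+c(\{i\})$. This is sub-additivity applied to the disjoint pair $S_1=N\setminus\{i\}$, $S_2=\{i\}$. The paper argues sub-additivity only informally, so I would prove it directly from \eqref{eq:charfun}, one hour at a time, and then sum.

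Fix an hour $t$ and set $a_t=\lambda^{\mathrm{DA}}_t-\lambda^-_t\ge 0$ and $b_t=\lambda^+_t-\lambda^{\mathrm{DA}}_t\ge 0$. The non-negativity of both follows from the two-price structure described after \eqref{spread}: the helping branch settles at the day-ahead price, and the harming branch is never more favourable than it. By \eqref{spread}, the hourly cost of a coalition with imbalance $\Delta$ is
\begin{equation*}
g_t(\Delta)=\lambda_{t,S}\Delta=\max\{a_t\Delta,\,-b_t\Delta\}.
\end{equation*}
To check this formula: for $\Delta>0$ the cost is $a_t\Delta\ge 0\ge -b_t\Delta$, and for $\Delta\le 0$ it is $-b_t\Delta\ge 0\ge a_t\Delta$. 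As a maximum of two linear functions, $g_t$ is convex and positively homogeneous, and therefore subadditive: $g_t(x+y)\le g_t(x)+g_t(y)$.

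By construction of the bids in Section~\ref{sec:bidding}, imbalances are additive: $\Delta_{t,N}=\Delta_{t,N\setminus\{i\}}+\Delta_{t,i}$. Hence $g_t(\Delta_{t,N})\le g_t(\Delta_{t,N\setminus\{i\}})+g_t(\Delta_{t,i})$ for every $t$. Summing over $t\in\mathcal{T}$ gives $c(N)\le c(N\setminus\{i\})+c(\{i\})$, which is $M_i\le c(\{i\})$.

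For the second claim I would argue by contradiction. Suppose the game is essential but $c(\{i\})=M_i$ for every $i\in N$. Summing over $i$ gives $\sum_{j\in N}c(\{j\})=\sum_{j\in N}M_j$. Theorem~\ref{the:MCCFinal}, through Lemmas~\ref{lem:BBholds} and \ref{lem:BBbreaks}, shows that in every case $\sum_{j\in N}M_j=\sum_{j}x^{\rm MCC}_j\le c(N)$. Combining these,
\begin{equation*}
\sum_{j\in N}c(\{j\})=\sum_{j\in N}M_j\le c(N)<\sum_{j\in N}c(\{j\}),
\end{equation*}
which is a contradiction. So some $i$ has $c(\{i\})>M_i$.

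The main point requiring care is the pointwise identity $\lambda_{t,S}\Delta=\max\{a_t\Delta,-b_t\Delta\}$. It relies on the price-taking assumption of Remark~\ref{remark1}, so that $a_t$ and $b_t$ do not depend on the coalition, and on the sign conventions in \eqref{spread}. Once that identity is in place, both parts reduce to sub-additivity of a convex homogeneous function and to the already-established deficit direction of MCC.
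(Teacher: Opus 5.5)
Your proof is correct, and for the second claim it takes a different route from the paper.

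The paper argues hour by hour. Essentiality forces an hour in which the consumers' imbalances mix signs. In that hour, a consumer whose imbalance opposes the portfolio's satisfies $c_t(\{i\})+c_t(N\setminus\{i\})>c_t(N)$ strictly. Every other hour contributes a non-negative sub-additivity gap, so summing gives $c(\{i\})>M_i$. This argument is constructive, since it tells you which consumers have slack. It is also the same computation that links essentiality to the mixed-sign condition \eqref{eq:gat_b} of Theorem~\ref{thm:GatelyWellDefined}. However, it needs care with boundary cases (zero spread on the harming side, zero portfolio imbalance) that the paper passes over.

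Your argument is global and non-constructive. It rests on $\sum_j\bigl(c(\{j\})-M_j\bigr)=\bigl(\sum_j c(\{j\})-c(N)\bigr)+\bigl(c(N)-\sum_j M_j\bigr)$, i.e.\ aggregation saving plus MCC deficit, which up to sign are the denominator and numerator of $d^*$. Hence some consumer's slack $c(\{i\})-M_i$ is at least $1/|N|$ of the aggregation saving, and no sign analysis is needed.

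The cost is the dependence on Theorem~\ref{the:MCCFinal}. Yet you only need the weak inequality $\sum_j M_j\le c(N)$, and your own framework gives it in one line:
\begin{itemize}
\item Since $\lambda_{t,N}\in\{a_t,-b_t\}$, your formula gives $g_t(y)\ge\lambda_{t,N}y$ for all $y$.
\item Hence $c_t(N\setminus\{i\})\ge c_t(N)-\lambda_{t,N}\Delta_{t,i}$, i.e.\ $M_{t,i}\le\lambda_{t,N}\Delta_{t,i}$.
\item Summing over $i$ and $t$ yields the bound.
\end{itemize}
This makes the proof self-contained. It also makes it immune to the knife-edge case $\Delta_{t,N\setminus\{i\}}=0$, where the strict inequalities of Appendix~\ref{AppC} can degenerate to equalities; this is harmless for you, since you only use $\le$.

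Your first part, deriving sub-additivity from $g_t(\Delta)=\max\{a_t\Delta,-b_t\Delta\}$, is likewise more rigorous than the paper. The paper simply invokes the informally argued sub-additivity of Section~\ref{sec:GameProperties}.
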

\begin{proof}
    See Appendix~\ref{app:gat}.
\end{proof}

\begin{theorem}\label{thm:GatelyWellDefined}
The Gately point is well-defined for the imbalance netting game if and only if:
\begin{subequations}
\begin{equation}\label{eq:gat_a}
        c(N) < \sum_{i\in N}c(\{i\}),
\end{equation}
which holds if and only if there exists at least one hour $t\in \mathcal{T}$ with $\lambda_{t,N}\neq 0$ such that
\begin{equation}\label{eq:gat_b}
        \left|\sum_{i\in N}\Delta_{t,i}\right| < \sum_{i\in N}|\Delta_{t,i}|.
\end{equation}
\end{subequations}

Moreover, whenever \eqref{eq:gat_a}--\eqref{eq:gat_b} hold, the Gately point is the unique imputation given by \eqref{equal}--\eqref{eq:gat_alloc}.
\end{theorem}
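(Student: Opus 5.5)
The plan has three parts. First reduce well-definedness to the sign of the denominator of \eqref{equal}. Then verify that \eqref{eq:gat_alloc} is the unique imputation once the denominator is nonzero. Finally translate \eqref{eq:gat_a} into the hourly condition \eqref{eq:gat_b} using the temporal separability behind Lemma~\ref{lem:MCCAdd}.

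\emph{Step 1: the denominator and the sign of $d^*$.} The imbalance netting game is sub-additive (Section~\ref{sec:GameProperties}), so iterating sub-additivity gives $c(N)\le\sum_{i\in N}c(\{i\})$. The denominator $c(N)-\sum_j c(\{j\})$ of \eqref{equal} is therefore non-positive, and it vanishes exactly when \eqref{eq:gat_a} fails. In that case $d^*$ is undefined and the Gately point is not well-defined, which gives necessity.

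For sufficiency, Theorem~\ref{the:MCCFinal} gives $\sum_j M_j\le c(N)$, so the numerator of \eqref{equal} is also non-positive. Hence $d^*\ge 0$, $d^*+1\ge 1$, and \eqref{eq:gat_alloc} is a finite vector.

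\emph{Step 2: \eqref{eq:gat_alloc} is an imputation and the unique minimiser of \eqref{eq:gately-dstar}.}
\begin{itemize}
\item Collective rationality: summing \eqref{eq:gat_alloc} over $i$ and substituting the definition of $d^*$ gives $\sum_i x^{\rm GP}_i=c(N)$ by direct algebra.
\item Individual rationality: $x^{\rm GP}_i$ is a convex combination of $c(\{i\})$ and $M_i$, and the preceding lemma gives $M_i\le c(\{i\})$, so $x^{\rm GP}_i\le c(\{i\})$.
\item Reduced form of $d_i$: on any collectively rational $\mathbf{x}$, the numerator of \eqref{eq:gately-d} equals $M_i-x_i$. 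Hence
\[
d_i(\mathbf{x})=\frac{M_i-x_i}{x_i-c(\{i\})}=-1+\frac{M_i-c(\{i\})}{x_i-c(\{i\})}.
\]
\item Monotonicity: for consumers with $M_i<c(\{i\})$ and $x_i<c(\{i\})$, this is strictly increasing in $x_i$. By the lemma such consumers exist.
\item Equalisation: lowering the charge of a consumer with the largest $d_i$ requires raising some other $x_j$ to keep the budget balanced. So at a minimiser of $\max_i d_i$ all these $d_i$ must be equal. Strict monotonicity then pins down a single common value, and it is $d^*$.
\item Degenerate consumers: those with $M_i=c(\{i\})$ have an undefined propensity and are charged $c(\{i\})$, following \cite{Staudacher2019}. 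This is consistent with \eqref{eq:gat_alloc}, which returns $c(\{i\})$ for them.
\end{itemize}
This establishes the ``moreover'' part.

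\emph{Step 3: equivalence of \eqref{eq:gat_a} and \eqref{eq:gat_b}.} Write $c=\sum_t c_t$ as in Lemma~\ref{lem:MCCAdd}. Fix an hour and, by Remark~\ref{remark1}, let $s_t\ge 0$ be the spread on the harming side of that hour. Then
\[
c_t(S)=s_t\bigl(\sigma_t\Delta_{t,S}\bigr)^+ ,
\]
where $\sigma_t=\pm1$ encodes the harming direction. Since $(\cdot)^+$ is subadditive,
\[
\sum_i c_t(\{i\})\ge c_t(N)\quad\text{for every hour},
\]
so \eqref{eq:gat_a} holds iff the inequality is strict in at least one hour. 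The inequality $\sum_i(\sigma_t\Delta_{t,i})^+>(\sigma_t\sum_i\Delta_{t,i})^+$ holds iff $s_t>0$ and some $\Delta_{t,i}$ point in the harming direction while others oppose it. That is exactly mixed signs, which is equivalent to \eqref{eq:gat_b}.

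\emph{Main obstacle.} The delicate step is matching this hourly condition with the theorem's qualifier ``$\lambda_{t,N}\neq0$''. When the grand coalition sits on the harming side, $\lambda_{t,N}=s_t>0$ and the two conditions coincide. However, strictness can also occur in an hour where $N$ helps, so $\lambda_{t,N}=0$, while some consumer individually harms. I would first try to read $\lambda_{t,N}\neq 0$ as ``the harming spread $s_t$ of hour $t$ is positive'', under which the equivalence is exact. If the literal reading is intended, I would add the helping-hour case explicitly to the characterisation rather than force it.
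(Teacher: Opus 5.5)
Your proposal is correct. Steps 1 and 3 coincide with the paper's argument: sub-additivity fixes the sign of the denominator of \eqref{equal}, and the hourly form $c_t(S)=s_t(\sigma_t\Delta_{t,S})^+$ shows that the aggregation gain is strict exactly in hours with mixed signs.

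You differ in the ``moreover'' part. The paper does not prove uniqueness itself. It shows $c(\{i\})\ge M_i$ for all $i$, with strict inequality for a consumer opposing the portfolio in a mixed-sign hour, and then invokes the sufficient conditions of Staudacher and Anwander \cite{Staudacher2019}. You instead derive uniqueness from the reduced form $d_i(\mathbf{x})=-1+(M_i-c(\{i\}))/(x_i-c(\{i\}))$ on budget-balanced vectors. You also check that \eqref{eq:gat_alloc} is an imputation, which the paper leaves implicit. Your route is self-contained; the paper's is shorter.

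Your equalisation bullet presupposes that a minimiser exists. A one-line argument gives existence and uniqueness together. If $\max_i d_i(\mathbf{x})\le d^*$, monotonicity yields $x_i\le x_i^{\mathrm{GP}}$ for every nondegenerate $i$, and $x_i\le c(\{i\})=x_i^{\mathrm{GP}}$ for degenerate ones. Then $\sum_i x_i=c(N)=\sum_i x_i^{\mathrm{GP}}$ forces $\mathbf{x}=\mathbf{x}^{\mathrm{GP}}$.

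Your ``main obstacle'' is a real defect of the statement, not of your proof. Take one hour with the system short, harming spread $s>0$, $\Delta_{t,1}=-1$ and $\Delta_{t,2}=3$. Then $\lambda_{t,N}=0$, yet $c(N)=0<s=c(\{1\})+c(\{2\})$, and the Gately point exists with $d^*=1$. So the ``only if'' direction of the hourly characterisation fails as written.

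The paper's proof never uses the qualifier $\lambda_{t,N}\neq 0$; its converse only treats the all-aligned case. Its step from mixed signs to a strict hourly gain also silently needs $s_t>0$. What is actually established is your version, with $\lambda_{t,N}\neq0$ replaced by positivity of the harming spread in hour $t$. Your first reading is the correct repair.
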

\begin{proof}
    See Appendix~\ref{app:gat}.
\end{proof}

\begin{remark}
The numerator of \eqref{equal} is $\sum_j M_j - c(N)$, i.e., equal to the MCC deficit of Theorem~\ref{the:MCCFinal}, and its denominator is the negative of the aggregation saving $\sum_j c(\{j\})-c(N)$ of Theorem~\ref{thm:GatelyWellDefined}. Hence, $d^{*}\ge 0$ equals the share of the aggregation gain that is not recovered under MCC, and \eqref{eq:gat_alloc} reads $x_i^{\mathrm{GP}} = \theta c(\{i\}) + (1-\theta)x_i^{\mathrm{MCC}}$ with $\theta = d^{*}/(1+d^{*})$. The Gately point blends the MCC charges with the standalone costs in the proportion that restores budget balance, and
coincides with MCC exactly when \eqref{eq:BudCon1} holds.
\end{remark}

\begin{corollary}
If instead consumers' imbalances are perfectly aligned in every hour $t \in \mathcal{T}$, i.e.,
\[
    \left|\sum_{i\in N} \Delta_{t,i}\right|
    = \sum_{i\in N} |\Delta_{t,i}|, \qquad \forall t \in \mathcal{T},
\]
the game becomes inessential, and hence the unique core allocation is
\begin{equation}
    x_i = c(\{i\}), \qquad \forall i \in N,
\end{equation}
which charges each consumer its standalone imbalance cost.
\end{corollary}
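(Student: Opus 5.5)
The proof has two parts: first that the perfect-alignment hypothesis makes the game inessential, then that in an inessential game the core consists of the single point $x_i=c(\{i\})$. Throughout we use the temporal separability of \eqref{eq:charfun}, so each statement can be checked hour by hour.

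For the first part, fix an hour $t$ with $\left|\sum_{i\in N}\Delta_{t,i}\right|=\sum_{i\in N}|\Delta_{t,i}|$. Equality in the triangle inequality means all nonzero $\Delta_{t,i}$ share a common sign. By additivity, every coalition imbalance $\Delta_{t,S}=\sum_{i\in S}\Delta_{t,i}$ then has that same sign, or is zero. Since the spread depends only on the sign of the imbalance, with the convention $\Delta\le 0$ versus $\Delta>0$ in \eqref{spread}, every $S$ containing a consumer with nonzero imbalance faces the same spread $\lambda_t^\ast$ as those consumers. Zero-imbalance consumers contribute $0$ to every cost term, whatever spread they face. Hence $\lambda_{t,S}\Delta_{t,S}=\sum_{i\in S}\lambda_{t,\{i\}}\Delta_{t,i}$ for every $S$. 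Summing over $t$ gives $c(S)=\sum_{i\in S}c(\{i\})$ for all $S\subseteq N$: the game is additive, and in particular $c(N)=\sum_i c(\{i\})$, so it is inessential. Alternatively, this is the contrapositive of the equivalence \eqref{eq:gat_a}$\Leftrightarrow$\eqref{eq:gat_b} in Theorem~\ref{thm:GatelyWellDefined}, since no hour satisfies \eqref{eq:gat_b}. I would still give the direct argument, because it yields full additivity, which is slightly more than inessentiality.

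For the second part, any imputation satisfies $x_i\le c(\{i\})$ and $\sum_i x_i=c(N)=\sum_i c(\{i\})$. Summing the individual-rationality inequalities, the total can reach $\sum_i c(\{i\})$ only if every inequality binds, so $x_i=c(\{i\})$ for all $i$ and the imputation set is this single point. It remains to show this point lies in the core, which also establishes nonemptiness. By the additivity just shown, $\epsilon(S,\mathbf{x})=c(S)-\sum_{i\in S}c(\{i\})=0\ge 0$ for every $S$. Inessentiality alone would already give this through sub-additivity: the excess is nonnegative for singletons, and since $\sum_i x_i=c(N)$, one can write $\epsilon(S)=c(S)+c(N\setminus S)-c(N)-\epsilon(N\setminus S)$ and argue from there. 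The additivity route is cleaner, so that is the one to use.

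The main obstacle is the zero-imbalance edge case in the first part. If $\Delta_{t,i}=0$ while the others are strictly positive, consumer $i$'s singleton spread differs from the coalition's spread, but the product is still $0$, so nothing changes. Similarly, if all imbalances are zero, every term vanishes. I would state this explicitly so that the sign-dependent case split in \eqref{spread} causes no gap.
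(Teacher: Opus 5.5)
Your proof is correct and takes the same route as the paper: perfect alignment gives every coalition's hourly imbalance a common sign, hence a common spread, so $c(S)=\sum_{i\in S}c(\{i\})$ for all $S$; individual rationality plus budget balance then reduce the core to the standalone-cost vector. Your version is more explicit than the appendix, which goes directly from a zero hourly gap for the grand coalition to additivity for every $S$, and which does not treat the zero-imbalance edge case or show that the core is exactly that single point.
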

\begin{proof}
    See Appendix~\ref{app:gat}.
\end{proof}

\section{Numerical Results} \label{sec:results}
This section applies the imbalance netting game to 19 Danish consumers sharing a photovoltaic PPA, quantifies aggregation gains, assesses allocation stability, and empirically verifies the analytical results of Section~\ref{sec:analytical}.

\begin{figure}[b]
    \centering
    \input{tex/figures/pv_coverage.pgf}
    \vspace{-8mm}
    \captionsetup{skip=4pt}
    \caption{\small{Bar chart showing consumer demand and PV coverage during the case study period. Demand is plotted on a logarithmic scale.}}
    \label{fig:demand}
\end{figure}

\vspace{-1mm}
\subsection{Input Data}
The theoretical findings are applied to a portfolio of 19 consumers, labelled $\rm{A}$ to $\rm{S}$, each holding a \ac{ppa} with a photovoltaic (PV) plant through the facilitator. The case study uses real-world hourly consumption measurements, electricity prices, PV production, and PV production forecasts. The latter three datasets are obtained from \textit{Energinet}'s Energi Data Service~\cite{EnergiDataService} while consumption data is from Reel. All data is from the DK1 price zone. Source code is available in our public repository~\cite{eriksen2026code}.

Each consumer has a PPA covering an individually negotiated share of the plant's PV production. Because all consumers are supplied by the same plant, production uncertainty is common to the portfolio and cannot be reduced through netting. We study 2024, comprising $|\mathcal{T}|=8784$ hourly settlement periods.

The portfolio is highly heterogeneous, with distinct demand profiles and substantial differences in total demand and PV coverage, as shown in Fig.~\ref{fig:demand}. Consumer $\rm{A}$ has a higher total demand than all other consumers combined, while consumer $\rm{B}$ exceeds consumers $\rm{C}$--$\rm{S}$ combined. Their imbalances are therefore more likely to determine the aggregate portfolio imbalance, making them potential pivotal consumers. PV coverage is defined as total PPA energy divided by total consumption and thus does not capture their temporal alignment. This real-world heterogeneity provides a suitable test case for revealing the benefits and limitations of the six cooperative cost-allocation mechanisms.

\vspace{-1mm}
\subsection{Results: Cost Savings}\label{sec:results_savings}
To motivate the application of cooperative game theory to imbalance netting within portfolios, the gains from creating coalitions are calculated in Fig. \ref{fig:Gains}. We define the \textit{coalition allocation ratio} as $\frac{c(S)}{\sum_{i\in S}c(\{i\})}$, that is, the coalition's cost divided by the sum of the standalone costs of its members. Fig. \ref{fig:Gains} shows that the average coalition allocation ratio decreases with coalition size reaching 10\% for the grand coalition, indicating substantial gains from aggregation. However, the variation across coalitions of equal size shows that coalition composition also matters. Some coalitions achieve cost savings exceeding 20\%, substantially above the average saving for coalitions of the same size. The challenge is then how to divide this coalitional cost saving among consumers in a stable manner.

\begin{figure}[!t]
    \centering
    \input{tex/figures/gains_of_aggregation.pgf}
    \vspace{-8mm}
    \captionsetup{skip=4pt}
    \caption{\small{The gain from aggregation in terms of cost savings over the year, captured by the coalition allocation ratio, as a function of coalition size, ranging from 1 (no coalition) to 19 (grand coalition). For each coalition size, there may be multiple possible coalitions. The green line represents the average allocation ratio across all possible coalitions of a given size, while the yellow shaded area indicates the variation in the ratio among these coalitions.}}
    \label{fig:Gains}
\end{figure} 

\begin{figure*}[t]
    \centering
    \input{tex/figures/p_cost_ratio_grouped_bar.pgf}
    \vspace{-8mm}
    \captionsetup{skip=3.8pt}
    \caption{\small{Consumer allocation ratio for some selected consumers.}}
    \label{fig:costRatio}
    \vspace{-1.5mm}
\end{figure*}

To take an individual perspective, the \emph{consumer allocation ratio} for every consumer \textit{i} is defined as $\frac{x^{\rm{M}}_i}{c(\{i\})}$, where $\rm{M}$ is the underlying allocation mechanism.
This metric shows the relative cost of consumer $i$ as part of the grand coalition. The results for some selected consumers are shown in Fig. \ref{fig:costRatio}, while the full results are available in Appendix \ref{app:full_ratio}. Any consumer with a consumer allocation ratio at most 100\% will prefer to be in the grand coalition, compared to operating alone. For comparison as a naive approach, the \emph{Flat Rate Allocation} is defined as dividing grand coalition costs proportional to consumption: $x^{\rm{FR}}_i=c(N) \sum_{t\in\mathcal{T}}\frac{P_{t,i}^{\rm{D}}}{\sum_{i\in N}P_{t,i}^{\rm{D}}}$.

Fig.~\ref{fig:costRatio} shows that the flat-rate allocation violates individual rationality, charging some consumers more than 200\% of their standalone imbalance cost and thereby giving them an incentive to leave the portfolio. The largest consumer, $\rm{A}$, has a lower allocation ratio under the flat rate than under the stable mechanisms, whereas most other consumers have a higher ratio. The flat rate therefore causes smaller consumers to subsidize consumer $\rm{A}$, as it disregards both $\rm{A}$'s large imbalance volume and its frequent role in determining the portfolio imbalance direction. By contrast, the budget-balanced mechanisms yield similar allocation ratios, all markedly different from those under the flat rate. Consumer costs thus depend far more on replacing the flat rate than on the choice among the budget-balanced mechanisms. This finding motivates a practical hybrid billing approach that combines partial socialization through a flat rate with partial individualization through one of the six cost-allocation mechanisms, as discussed in Appendix~\ref{BusinessModel}.

\begin{table}[b]
\centering
\caption{\small{Minimum coalition excess under each allocation mechanism (\EUR{}).}}
\resizebox{\columnwidth}{!}{
    \begin{tabular}{|c|c|c|c|c|c|c|}
    \hline
    Shapley & MCC & VCG & Nucleolus & \makecell{Marginal\\price} & \makecell{Gately\\point} & \makecell{Flat\\rate} \\
    \hline
    3.2 & 29.2 & 84.6 & 14.6 & 0 & 10.2 & -32,486 \\
    \hline
    \end{tabular}
}
\vspace{-3mm}
\label{tab:ExcessCost}
\end{table}

\vspace{-3mm}
\subsection{Results: Stability of Mechanisms}\label{sec:results_stability}
Table~\ref{tab:ExcessCost} reports the minimum coalition excess under each mechanism, corresponding to the coalition with the strongest incentive to deviate. A non-negative minimum excess implies that all coalition rationality inequalities are satisfied. For a budget-balanced allocation, it additionally implies core membership and hence stability. All budget-balanced game-theoretic mechanisms are stable for this dataset, whereas the flat-rate allocation is not. MCC and VCG also satisfy all coalition rationality inequalities, but are not imputations because they do not recover the grand-coalition cost. Among the stable, budget-balanced mechanisms, the marginal price allocation has the smallest stability margin. The nucleolus provides the lexicographically most stable benchmark, while the Gately point has the largest minimum excess among the computationally tractable, budget-balanced mechanisms.
\subsection{Empirical Verification of Analytical Results}
We numerically verify Theorems~\ref{the:MCCFinal} and~\ref{thm:GatelyWellDefined} using the case study data. The budget balance condition~\eqref{eq:BudCon1} fails in any hour in which removing at least one consumer flips the sign of the portfolio imbalance. Table~\ref{tab:signflip} reports the corresponding split across all $|\mathcal{T}|$=8{,}784 hours of 2024, together with the resulting MCC and VCG deficits.

\begin{table}[b] 
\centering 
\caption{\small{Sign-flip hours and the resulting MCC and VCG budget deficits. Deficits are computed as $\sum_{i\in N}x_i-c(N)$ over each hour subset; negative values indicate under-recovery by the facilitator. $c(N)$ = \EUR{179{,}981}.}} \label{tab:signflip} 
\begin{tabular}{lccc} 
\toprule & \makecell{Hours\\(share)} & \makecell{MCC deficit\\ (\EUR{})} & \makecell{VCG deficit\\ (\EUR{})} \\ 
\midrule \makecell[l]{No pivotal consumer} & 6{,}733 (76.65\%) & 0.00 & $-159{,}304$ \\ 
\makecell[l]{At least one pivotal\\consumer} & 2{,}051 (23.35\%) & $-10{,}995$ & $-31{,}672$ \\ 
\midrule \textbf{Total} & \textbf{8{,}784 (100\%)} & $\mathbf{-10{,}995}$ & $\mathbf{-190{,}976}$ \\ \bottomrule 
\end{tabular} 
\vspace{-2mm}
\end{table} 

Consistent with Lemma~\ref{lem:BBholds}, MCC is exactly budget-balanced in the 76.65\% of hours satisfying~\eqref{eq:BudCon1}. Its entire shortfall of \EUR{10,995} arises in the remaining 23.35\% of hours, in which removing at least one consumer flips the portfolio imbalance sign. This confirms Lemma~\ref{lem:BBbreaks} and Theorem~\ref{the:MCCFinal}. As noted in Section~\ref{sec:properties_and_comparison}, VCG incurs a deficit even when MCC is budget-balanced.

By Theorem~\ref{thm:GatelyWellDefined}, the Gately point is well-defined and unique if the dataset contains at least one hour with mixed consumer imbalance signs. Such hours account for 86.77\% of the dataset. The Gately point is therefore well-defined and unique for 2024 and for any subset containing at least one of these 7{,}622 hours.

We also verify Proposition \ref{prop:mcc_vcg_group_rational}. The inequality $x_i^{\rm{MCC}}\leq x_i^{\rm{MP}}$ in Lemma~\ref{lem:mcc-group-rational} holds for all 19 consumers, with an average gap of \EUR{578.68}. Because the marginal price allocation is group rational, this confirms the group rationality of MCC. Likewise, Corollary~\ref{cor:vcg-group-rational} is verified: all VCG allocations range from $-$\EUR{9,548} to \EUR{0}, so no consumer makes a positive net payment under VCG. Consumer $\rm{A}$ receives a VCG transfer of \EUR{9,548} and is pivotal in 1{,}372 of the 8{,}784 hours (15.6\%), whereas consumer $\rm{S}$ receives a negligible transfer and is pivotal in only two hours (0.02\%). This confirms the observation in Section~\ref{sec:VCG} that VCG assigns non-zero transfers only to consumers whose participation changes the portfolio imbalance direction.

Finally, we verify~\eqref{eq:vcg-mcc-mp} numerically: the residual $x_i^{\rm{VCG}}-(x_i^{\rm{MCC}}-x_i^{\rm{MP}})$ is zero for every consumer, validating both the identity and its implementation.

\section{Conclusion}\label{sec:conclusion}
We formulate imbalance cost allocation within a facilitator's portfolio as a cooperative \textit{imbalance netting game}, in which coalition costs arise from an asymmetric two-price settlement under inelastic demand and additive bidding. In a case study of 19 consumers using 2024 Danish data, aggregation reduces the grand-coalition imbalance cost by 10\% relative to the sum of the consumers' standalone costs. The distribution of these savings, however, depends strongly on the allocation mechanism. A flat-rate allocation proportional to consumption violates individual rationality, charging some consumers more than twice their standalone imbalance cost. By contrast, all budget-balanced game-theoretic mechanisms examined belong to the core for this dataset. Among them, the Gately point and marginal price allocations combine stability and budget balance with computational requirements that scale linearly with portfolio size. The MCC and VCG mechanisms satisfy all group rationality inequalities but under-recover the grand-coalition cost and therefore do not constitute imputations. 

We establish the game-specific properties underlying these findings, including a necessary and sufficient condition for budget balance of the MCC mechanism, group rationality of the MCC and VCG mechanisms, and a necessary and sufficient condition for the Gately point to be well-defined and unique. Future work should examine symmetric one-price settlement, under which pooling primarily reduces cost volatility rather than expected cost. Further extensions include decentralized forecasting and flexible demand, which introduce incentive-compatibility considerations, as well as coalition-level bidding and price-making portfolios, in which portfolio imbalances can influence balancing price spreads.

\section*{Acknowledgement}
The authors would like to thank Reel and, in particular, David Ribberholt Ipsen and Edoardo Simioni, for their assistance in understanding the business model of a facilitator and for supplying data. We would also like to thank Farzaneh Pourahmadi and Valdemar Søgaard for providing feedback on early versions of the manuscript. This manuscript includes grammatical corrections assisted by Claude AI and Mistral AI and code has been partially generated through the same models. All intellectual content, interpretation, and conclusions are the authors' own.

\vspace{-1mm}
\bibliographystyle{IEEEtran}
\bibliography{tex/bibliography/Bibliography}

\begin{thebibliography}{10}
\providecommand{\url}[1]{#1}
\csname url@samestyle\endcsname
\providecommand{\newblock}{\relax}
\providecommand{\bibinfo}[2]{#2}
\providecommand{\BIBentrySTDinterwordspacing}{\spaceskip=0pt\relax}
\providecommand{\BIBentryALTinterwordstretchfactor}{4}
\providecommand{\BIBentryALTinterwordspacing}{\spaceskip=\fontdimen2\font plus
\BIBentryALTinterwordstretchfactor\fontdimen3\font minus \fontdimen4\font\relax}
\providecommand{\BIBforeignlanguage}[2]{{%
\expandafter\ifx\csname l@#1\endcsname\relax
\typeout{** WARNING: IEEEtran.bst: No hyphenation pattern has been}%
\typeout{** loaded for the language `#1'. Using the pattern for}%
\typeout{** the default language instead.}%
\else
\language=\csname l@#1\endcsname
\fi
#2}}
\providecommand{\BIBdecl}{\relax}
\BIBdecl

\bibitem{ImbalanceDirective}
\BIBentryALTinterwordspacing
{European Union}, ``{Directive-2019/944},'' 2019. [Online]. Available: \url{https://eur-lex.europa.eu/eli/dir/2019/944/oj/eng}
\BIBentrySTDinterwordspacing

\bibitem{EnerginetFullCost}
\BIBentryALTinterwordspacing
``Energinet position on full cost balancing,'' Energinet, Tech. Rep., 2024. [Online]. Available: \url{https://energinet.dk/media/xbpjtgde/energinet-position-on-full-cost-balancing.pdf}
\BIBentrySTDinterwordspacing

\bibitem{Chu2016}
S.~Chu, Y.~Cui, and N.~Liu, ``The path towards sustainable energy,'' \emph{Nature Materials}, vol.~16, no.~1, pp. 16--22, 2016.

\bibitem{Fleten2018}
S.~E. Fleten, K.~T. Midthun, T.~Bj{\o}rkvoll, A.~Werner, and M.~Fodstad, ``The portfolio perspective in electricity generation and market operations,'' in \emph{Proceedings of the 15th International Conference on the European Energy Market (EEM)}, Lodz, Poland, 2018.

\bibitem{Tekani2025}
D.~V. Tekani, J.~Shi, and H.~Grebel, ``Managing renewable energy resources using equity-market risk tools -- the efficient frontiers,'' \emph{Energy Efficiency}, vol.~18, no.~6, pp. 1--10, 2025.

\bibitem{Oren2025}
S.~S. Oren and A.~Papalexopoulos, ``{VPP} integration and market participation: {B}eyond optimization,'' \emph{IEEE Power and Energy Magazine}, vol.~23, no.~6, pp. 166--172, 2025.

\bibitem{MITRIDATI2021102177}
L.~Mitridati, J.~Kazempour, and P.~Pinson, ``Design and game-theoretic analysis of community-based market mechanisms in heat and electricity systems,'' \emph{Omega}, vol.~99, p. 102177, 2021.

\bibitem{vespermann2021}
N.~Vespermann, T.~Hamacher, and J.~Kazempour, ``Access economy for storage in energy communities,'' \emph{IEEE Transactions on Power Systems}, vol.~36, no.~3, pp. 2234--2250, 2021.

\bibitem{Exizidis2019}
L.~Exizidis, J.~Kazempour, A.~Papakonstantinou, P.~Pinson, Z.~D. Gr\`eve, and F.~Vall\'ee, ``Incentive-compatibility in a two-stage stochastic electricity market with high wind power penetration,'' \emph{IEEE Transactions on Power Systems}, vol.~34, no.~4, pp. 2846--2858, 2019.

\bibitem{Baeyens2013}
E.~Baeyens, E.~Y. Bitar, P.~P. Khargonekar, and K.~Poolla, ``Coalitional aggregation of wind power,'' \emph{IEEE Transactions on Power Systems}, vol.~28, no.~4, pp. 3774--3784, 2013.

\bibitem{Russo2020}
M.~Russo and V.~Bertsch, ``A looming revolution: {I}mplications of self-generation for the risk exposure of retailers,'' \emph{Energy Economics}, vol.~92, p. 104970, 2020.

\bibitem{Guo2021}
Y.~Guo, M.~Pan, and Y.~Gong, ``Aggregation-based colocation datacenter energy management in wholesale markets,'' \emph{IEEE Transactions on Cloud Computing}, vol.~9, no.~1, pp. 66--78, 2021.

\bibitem{Zuluaga2025}
T.~{Valencia Zuluaga} and S.~S. Oren, ``Stable and fair uniform price allocations of community choice aggregation gains in retail electricity markets,'' \emph{Energy Systems}, pp. 1--34, 2025.

\bibitem{ACourseInCooperativeGameTheory}
S.~R. Chakravarty, M.~Mitra, and P.~Sarkar, \emph{A Course on Cooperative Game Theory}.\hskip 1em plus 0.5em minus 0.4em\relax Cambridge University Press, 2015.

\bibitem{pinson2007}
P.~Pinson, C.~Chevallier, and G.~N. Kariniotakis, ``Trading wind generation from short-term probabilistic forecasts of wind power,'' \emph{IEEE Transactions on Power Systems}, vol.~22, no.~3, pp. 1148--1156, 2007.

\bibitem{Shapley1953}
L.~S. Shapley, ``A value for n-person games,'' in \emph{Contributions to the Theory of Games II}.\hskip 1em plus 0.5em minus 0.4em\relax Princeton, NJ: Princeton University Press, 1953, pp. 307--317.

\bibitem{Shapley1971}
------, ``Cores of convex games,'' \emph{International Journal of Game Theory}, vol.~1, no.~1, pp. 11--26, 1971.

\bibitem{Vickrey1961}
W.~Vickrey, ``Counterspeculation, auctions, and competitive sealed tenders,'' \emph{The Journal of Finance}, vol.~16, no.~1, pp. 8--37, 1961.

\bibitem{Clarke1971}
E.~H. Clarke, ``Multipart pricing of public goods,'' \emph{Public Choice}, vol.~11, no.~1, pp. 17--33, 1971.

\bibitem{Groves1973}
T.~Groves, ``Incentives in teams,'' \emph{Econometrica}, vol.~41, no.~4, pp. 617--631, 1973.

\bibitem{VCG13}
M.~H. Rothkopf, ``Thirteen reasons why the {V}ickrey-{C}larke-{G}roves process is not practical,'' \emph{Oper. Res.}, vol.~55, no.~2, pp. 191--197, 2007.

\bibitem{Scmeidler1969}
D.~Schmeidler, ``The nucleolus of a characteristic function game,'' \emph{SIAM Journal on Applied Mathematics}, vol.~17, no.~6, pp. 1163--1170, 1969.

\bibitem{Gately1974}
D.~Gately, ``Sharing the gains from regional cooperation: A game theoretic application to planning investment in electric power,'' \emph{International Economic Review}, vol.~15, no.~1, pp. 195--208, 1974.

\bibitem{Littlechild1976}
S.~C. Littlechild and K.~G. Vaidya, ``The propensity to disrupt and the disruption nucleolus of a characteristic function game,'' \emph{International Journal of Game Theory}, vol.~5, no. 2-3, pp. 151--161, 1976.

\bibitem{Staudacher2019}
\BIBentryALTinterwordspacing
J.~Staudacher and J.~Anwander, ``Conditions for the uniqueness of the gately point for cooperative games,'' 2019. [Online]. Available: \url{https://arxiv.org/abs/1901.01485}
\BIBentrySTDinterwordspacing

\bibitem{EnergiDataService}
\BIBentryALTinterwordspacing
Energinet, ``Energi data service,'' licensed under Creative Commons CC-BY 4.0. [Online]. Available: \url{https://energidataservice.dk/}
\BIBentrySTDinterwordspacing

\bibitem{eriksen2026code}
A.~W. Eriksen, ``Code for ``sharing the gains of aggregation'','' \url{https://github.com/AsmusWE/Aggregation-imbalance-sharing}, 2026, gitHub repository; code in release v1.0.

\end{thebibliography}

\clearpage

\appendix

\subsection{Additivity of MCC}\label{AppA}
\begin{proof}
The imbalance netting game decomposes over the horizon: defining the hourly subgame $c_t(S) = \lambda_{t,S}\Delta_{t,S}$, we have $c(S) = \sum_{t\in\mathcal{T}} c_t(S)$ for every $S\subseteq N$ by \eqref{eq:charfun}. The payment to consumer $i$ is its marginal cost contribution. Substituting the decomposition,
\begin{align}
    x^{\rm{MCC}}_i 
    &= c(N) - c(N \setminus \{i\}) \nonumber\\
    &= \sum_{t\in\mathcal{T}} c_t(N) - \sum_{t\in\mathcal{T}} c_t(N \setminus \{i\}) \nonumber\\
    &= \sum_{t\in\mathcal{T}} \bigl[ c_t(N) - c_t(N \setminus \{i\}) \bigr] = \sum_{t\in\mathcal{T}} M_{t,i},
    \label{eq:MCCadditive}
\end{align}
where $M_{t,i} = c_t(N)\!-\!c_t(N\!\setminus\!\{i\})$ is consumer $i$'s marginal cost contribution in hour $t$. Thus, each consumer's MCC charge is the sum of its hourly marginal cost contributions, i.e., $\mathrm{MCC}(c)\!=\!\sum_{t\in\mathcal{T}} \mathrm{MCC}(c_t)$ componentwise, proving Lemma~\ref{lem:MCCAdd}.
\end{proof}

\subsection{Conditions for Budget Balance of MCC}\label{AppB}
\begin{proof}
Consider a single-hour subgame $c_t$ of $c$. By \eqref{eq:charfun}, its characteristic function is
\begin{subequations}\label{eq:vcgcon}
\begin{equation}\label{eq:vcgcon-uN}
    c_t(N) = \lambda_{t,N}\,\Delta_{t,N} = \lambda_{t,N}\sum_{i\in N} \Delta_{t,i}.
\end{equation}

Assume that in this hour, the price spread is unchanged when any single consumer leaves the coalition,
\begin{equation}\label{eq:vcgcon-cond}
    \lambda_{t,N} = \lambda_{t,N\setminus\{i\}}, \qquad \forall i\in N.
\end{equation}

Under this condition, $c_t(N\!\setminus\!\{i\})\!=\!\lambda_{t,N\!\setminus\!\{i\}}\Delta_{t,N\setminus\{i\}}\!=\!\lambda_{t,N}\Delta_{t,N\!\setminus\!\{i\}}$, so the marginal contribution of consumer $i$ is
\begin{align}\label{eq:vcgcon-Mi}
    M_{t,i}
    &= c_t(N) - c_t(N\setminus\{i\}) \nonumber\\
    &= \lambda_{t,N}\Delta_{t,N} - \lambda_{t,N}\Delta_{t,N\setminus\{i\}}
     = \lambda_{t,N}\Delta_{t,i},
\end{align}
using $\Delta_{t,N}\!-\!\Delta_{t,N\!\setminus\!\{i\}}\!=\!\Delta_{t,i}$. Summing over consumers,
\begin{equation}\label{eq:vcgcon-sum}
    \sum_{i\in N} M_{t,i}
    = \lambda_{t,N}\sum_{i\in N}\Delta_{t,i}
    = c_t(N).
\end{equation}
\end{subequations}

Hence, AMC as defined in \eqref{AMC} holds for hour $t$. Since MCC allocates marginal contributions, AMC implies that the consumer charges sum to $c_t(N)$, so MCC is budget-balanced in that hour. By the additivity of Lemma~\ref{lem:MCCAdd}, budget balance in every hour gives budget balance over the whole horizon, proving Lemma~\ref{lem:BBholds}.
\end{proof}

\subsection{Budget Deficit of MCC}\label{AppC}
\begin{proof}
Consider an hour $t$ with
\begin{equation}\label{eq:deltaset}
    \delta_t = \{\, i \in N \mid \lambda_{t,N} \ne \lambda_{t,N\setminus\{i\}} \,\}\neq\emptyset .
\end{equation}

The facilitator and every consumer within the portfolio are price-takers, so removing a consumer changes the spread $\lambda_{t,N}$ only by flipping the sign of the coalition imbalance (see Remark \ref{remark1}). Hence, each $i\!\in\!\delta_t$ has $\Delta_{t,i}$ sharing the sign of $\Delta_{t,N}$ with $|\Delta_{t,i}|\!>\!|\Delta_{t,N}|$, so its removal flips the coalition between helping and harming the power system. A consumer $i\!\notin\!\delta_t$ leaves the spread unchanged and satisfies $M_{t,i}\!=\!\lambda_{t,N}\Delta_{t,i}$, as in Appendix~\ref{AppB}. Two cases arise, according to the direction of the grand coalition in hour $t$.

\emph{Case 1: the grand coalition harms}, so $c_t(N)\!=\!\lambda_{t,N}\Delta_{t,N}\!>\!0$. Removing any $i\!\in\!\delta_t$ flips the remainder to helping, which settles at the day-ahead price, so $c_t(N\!\setminus\!\{i\})\!=\!0$ and thus $M_{t,i}\!=\!c_t(N)$. Summing over all consumers and using the identity $\lambda_{t,N}\sum_{i\in N}\Delta_{t,i}\!=\!c_t(N)$ to replace the non-flipping terms,
\begin{subequations}
\begin{align}\label{eq:mcc-case1}
    \sum_{i\in N} M_{t,i}
    &= |\delta_t|\,c_t(N) + \lambda_{t,N}\!\!\sum_{i\notin\delta_t}\!\!\Delta_{t,i} \nonumber\\
    &= (|\delta_t|+1)\,c_t(N) - \lambda_{t,N}\!\!\sum_{i\in\delta_t}\!\!\Delta_{t,i},
\end{align}
where $|\delta_t|$ is the number of consumers in $\delta_t$. For each $i\in\delta_t$, $\Delta_{t,i}$ shares the sign of $\Delta_{t,N}$ and exceeds it in magnitude, so $\lambda_{t,N} \Delta_{t,i} > \lambda_{t,N}\Delta_{t,N} = c_t(N)$; hence $\lambda_{t,N}\sum_{i\in\delta_t}\Delta_{t,i} > |\delta_t|\, c_t(N)$. \emph{This inequality $M_{t,i} < \lambda_{t,N}\Delta_{t,i}$ holds termwise for each individual $i \in \delta_t$, not merely in aggregate, a fact used directly in the proof of Lemma~\ref{lem:mcc-group-rational}}. Substituting into \eqref{eq:mcc-case1},
\begin{equation}\label{eq:mcc-case1-deficit}
    \sum_{i\in N} M_{t,i} < (|\delta_t|+1)\,c_t(N) - |\delta_t|\,c_t(N) = c_t(N).
\end{equation}

\emph{Case 2: the grand coalition helps}, so $c_t(N)\!=\!0$ and $\lambda_{t,N}\!=\!0$. Every non-flipping consumer then has $M_{t,i}=\lambda_{t,N}\Delta_{t,i}\!=\!0$, while removing any $i\!\in\!\delta_t$ flips the remainder to harming, giving $c_t(N\setminus\{i\})\!>\!0$ and $M_{t,i}\!=\!c_t(N)-c_t(N\setminus\{i\})\!=\!-c_t(N\setminus\{i\})$. Therefore
\begin{equation}\label{eq:mcc-case2-deficit}
    \sum_{i\in N} M_{t,i}
    = -\!\!\sum_{i\in\delta_t}\!\! c_t(N\setminus\{i\})
    < 0 = c_t(N).
\end{equation}
\end{subequations}

In both cases, $\sum_{i\in N} M_{t,i} < c_t(N)$ whenever $\delta_t\ne\emptyset$. Since MCC allocates $M_{t,i}$ to each consumer, the charges collected in such an hour sum to strictly less than the hourly imbalance cost, leaving the facilitator in budget deficit. The result extends to any game containing such an hour, so MCC is not budget-balanced whenever \eqref{eq:BudCon1} fails, and the shortfall always favors the consumers.
\end{proof}

\subsection{Group Rationality of MCC and VCG}\label{app:gr_mcc_vcg}
\setcounter{theorem}{0} 
\renewcommand{\thetheorem}{D.\arabic{theorem}}
\begin{lemma}[Group rationality of MCC] \label{lem:mcc-group-rational}
For every consumer $i \in N$,
\begin{subequations}
\begin{equation}
x_i^{\mathrm{MCC}} \leq x_i^{\mathrm{MP}}, \label{eq:mcc-leq-mp}
\end{equation}
where $x_i^{\mathrm{MP}}$ is the marginal price allocation of \eqref{eq:mp}. Consequently, MCC is group rational: since marginal price is group rational for this class of games~\cite{Guo2021}, for every coalition $S \subseteq N$,
\begin{equation}
\sum_{i \in S} x_i^{\mathrm{MCC}} \leq \sum_{i \in S} x_i^{\mathrm{MP}} \leq c(S), \label{eq:mcc-group-rational}
\end{equation}
\end{subequations}
i.e., $\epsilon(S, x^{\mathrm{MCC}}) \geq 0$ for all $S \subseteq N$, so no coalition can lower its total charge by leaving the grand coalition under MCC.
\end{lemma}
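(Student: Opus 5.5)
The plan is to prove \eqref{eq:mcc-leq-mp} hour by hour and then sum. By Lemma~\ref{lem:MCCAdd}, $x_i^{\mathrm{MCC}}=\sum_{t\in\mathcal{T}}M_{t,i}$. Definition \eqref{eq:mp} gives $x_i^{\mathrm{MP}}=\sum_{t\in\mathcal{T}}\lambda_{t,N}\Delta_{t,i}$. It therefore suffices to show $M_{t,i}\le\lambda_{t,N}\Delta_{t,i}$ for every hour $t$ and every consumer $i$. For this I reuse the partition from Appendix~\ref{AppC}, with $\delta_t=\{i\in N\mid \lambda_{t,N}\neq\lambda_{t,N\setminus\{i\}}\}$.

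The first case is $i\notin\delta_t$. The spread is unchanged when $i$ leaves, so the computation in Appendix~\ref{AppB} gives equality, $M_{t,i}=\lambda_{t,N}\Delta_{t,i}$.

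The second case is $i\in\delta_t$, and I split it by the direction of the grand coalition, as in Appendix~\ref{AppC}. If the grand coalition harms, that appendix already gives $M_{t,i}=c_t(N)$ together with the termwise strict inequality $M_{t,i}<\lambda_{t,N}\Delta_{t,i}$. This inequality holds because $\Delta_{t,i}$ has the same sign as $\Delta_{t,N}$ and a larger magnitude. If the grand coalition helps, then $\lambda_{t,N}=0$, so the right-hand side is $0$. Meanwhile $M_{t,i}=-c_t(N\setminus\{i\})\le 0$, since $c_t(S)\ge 0$ for every coalition under the two-price settlement. Summing over $t$ then yields \eqref{eq:mcc-leq-mp}.

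For \eqref{eq:mcc-group-rational}, I sum \eqref{eq:mcc-leq-mp} over $i\in S$ to obtain the first inequality. For the second inequality I invoke the group rationality of the marginal price allocation from \cite{Guo2021}. If a self-contained argument is preferred, I would verify it hourly. When $\lambda_{t,N}=0$ the hourly MP charge is $0\le c_t(S)$. When $\lambda_{t,N}>0$, write $s=\operatorname{sign}(\Delta_{t,N})$; the nonzero spread is the harming spread in direction $s$, and the hourly MP charge is $\lambda_{t,N}\Delta_{t,S}$. If $\Delta_{t,S}$ has sign $s$, then $S$ also harms and faces the same spread by Remark~\ref{remark1}, so its cost is exactly $\lambda_{t,N}\Delta_{t,S}$. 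Otherwise $\lambda_{t,N}\Delta_{t,S}\le 0\le c_t(S)$. The main subtlety, and the step I expect to need most care, is this sign bookkeeping: the product of the spread with an imbalance of the harming sign must be nonnegative, and $\lambda_{t,N}$ must coincide with $\lambda_{t,S}$ whenever $S$ and $N$ are on the same side. Both facts follow from the structure of \eqref{spread} together with the price-taker assumption. Finally, $\epsilon(S,x^{\mathrm{MCC}})\ge0$ follows directly from the definition of excess.
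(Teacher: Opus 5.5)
Your proof is correct and follows the paper's own argument. The paper likewise reduces \eqref{eq:mcc-leq-mp} to the hourly comparison $M_{t,i}\le\lambda_{t,N}\Delta_{t,i}$ via the $\delta_t$ partition of Appendix~\ref{AppC}, sums over $t$, and cites \cite{Guo2021} for the group rationality of marginal price. Your optional self-contained check of that last step goes beyond the paper, with one fix: the case split should read $\lambda_{t,N}\neq 0$ rather than $\lambda_{t,N}>0$, and should be made by branch of \eqref{spread} rather than by the sign of $\Delta_{t,N}$, because the short-side spread $\lambda_t^{\mathrm{DA}}-\lambda_t^{+}$ is nonpositive.
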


\begin{proof}
By construction, $x_i^{\mathrm{MP}} = \sum_{t\in\mathcal{T}} \lambda_{t,N}\Delta_{t,i}$. Appendix~C (proof of Lemma~IV.3) shows termwise that $M_{t,i} = \lambda_{t,N}\Delta_{t,i}$ for $i \notin \delta_t$, and $M_{t,i} < \lambda_{t,N}\Delta_{t,i}$ for $i \in \delta_t$. Summing over $t \in \mathcal{T}$ gives $x_i^{\mathrm{MCC}} \leq x_i^{\mathrm{MP}}$ for every $i \in N$, proving \eqref{eq:mcc-leq-mp}. Inequality~\eqref{eq:mcc-group-rational} then follows by summing over $i \in S$ and invoking the group rationality of marginal price~\cite{Guo2021}.
\end{proof}

\begin{corollary}[Group rationality of VCG]
\label{cor:vcg-group-rational}
The VCG allocation is group rational.
\end{corollary}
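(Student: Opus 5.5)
We need to show that for every coalition $S\subseteq N$, $\sum_{i\in S}x_i^{\mathrm{VCG}}\le c(S)$, i.e.\ $\epsilon(S,\mathbf{x}^{\mathrm{VCG}})\ge 0$. The plan is to prove the stronger termwise statement $x_i^{\mathrm{VCG}}\le 0$ for every $i\in N$. Group rationality then follows at once, because $c(S)\ge 0$ for every coalition by the two-price construction in Section~\ref{sec:cost}.

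We will use the identity \eqref{eq:vcg-mcc-mp}, $x_i^{\mathrm{VCG}}=x_i^{\mathrm{MCC}}-x_i^{\mathrm{MP}}$. It holds because bids are additive, so $\Delta_{t,N\setminus\{i\}}=\Delta_{t,N}-\Delta_{t,i}$. Combining it with Lemma~\ref{lem:mcc-group-rational}, which gives $x_i^{\mathrm{MCC}}\le x_i^{\mathrm{MP}}$ for every $i$, yields $x_i^{\mathrm{VCG}}\le 0$ immediately. If we want the hour-by-hour picture, we can instead work from \eqref{eq:VCG_others_cost1}. For $i\notin\delta_t$ the hourly term $(\lambda_{t,N}-\lambda_{t,N\setminus\{i\}})\Delta_{t,N\setminus\{i\}}$ is zero. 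For $i\in\delta_t$ it equals $M_{t,i}-\lambda_{t,N}\Delta_{t,i}$, which is strictly negative by the termwise inequality highlighted in Appendix~\ref{AppC}. Summing over $t$ gives $x_i^{\mathrm{VCG}}\le 0$.

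Then, for any $S\subseteq N$,
\[
\sum_{i\in S}x_i^{\mathrm{VCG}}\le 0\le c(S),
\]
so $\epsilon(S,\mathbf{x}^{\mathrm{VCG}})\ge 0$ for all $S$, which is the claim.

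The only delicate step is the termwise sign for pivotal consumers in the case where the grand coalition helps (Case~2 of Appendix~\ref{AppC}). There $\lambda_{t,N}=0$, so the hourly VCG term reduces to $-\lambda_{t,N\setminus\{i\}}\Delta_{t,N\setminus\{i\}}=-c_t(N\setminus\{i\})<0$. Appendix~\ref{AppC} states the inequality $M_{t,i}<\lambda_{t,N}\Delta_{t,i}$ explicitly only in Case~1, so I would verify Case~2 directly from this expression rather than relying solely on the Appendix~\ref{AppC} remark.
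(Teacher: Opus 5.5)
Your proof is correct and is essentially the paper's own argument: the identity $x_i^{\mathrm{VCG}}=x_i^{\mathrm{MCC}}-x_i^{\mathrm{MP}}$ together with Lemma~\ref{lem:mcc-group-rational} gives $x_i^{\mathrm{VCG}}\le 0$ for every consumer, and $c(S)\ge 0$ then yields $\sum_{i\in S}x_i^{\mathrm{VCG}}\le 0\le c(S)$. Your hour-by-hour check is an optional supplement that only needs the non-strict sign of each hourly term, which matters because your claimed strict negativity can fail: when $\Delta_{t,N\setminus\{i\}}=0$ the consumer can still belong to $\delta_t$ while its hourly VCG term is exactly zero.
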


\begin{proof}
By \eqref{eq:vcg-mcc-mp}, $x_i^{\mathrm{VCG}} = x_i^{\mathrm{MCC}} - x_i^{\mathrm{MP}}$ for all $i \in N$. Lemma~\ref{lem:mcc-group-rational} gives $x_i^{\mathrm{MCC}} \leq x_i^{\mathrm{MP}}$, so $x_i^{\mathrm{VCG}} \leq 0$ for every consumer --- under VCG, no consumer is ever charged a positive net amount in this game. Since $c(S) \geq 0$ for every coalition $S$ (Section~II-B), it follows immediately that for any $S \subseteq N$,
\begin{equation}
\sum_{i \in S} x_i^{\mathrm{VCG}} \leq 0 \leq c(S),
\end{equation}
so $\epsilon(S, x^{\mathrm{VCG}}) \geq 0$ for all $S \subseteq N$, establishing group rationality.
\end{proof}

\begin{corollary}[Additivity of VCG]
\label{cor:vcg-additive}
The VCG mechanism is additive.
\end{corollary}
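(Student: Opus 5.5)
The plan is to obtain additivity of VCG from the identity \eqref{eq:vcg-mcc-mp} and the additivity of its two components, rather than from the VCG formula directly. Write the game as a sum of hourly subgames, $c=\sum_{t\in\mathcal{T}} c_t$ with $c_t(S)=\lambda_{t,S}\Delta_{t,S}$, exactly as in the proof of Lemma~\ref{lem:MCCAdd}. Let $\mathrm{VCG}(c)$ and $\mathrm{MP}(c)$ denote the allocation vectors the two mechanisms assign in game $c$.

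First, I will note that the identity \eqref{eq:vcg-mcc-mp} applies to each single-hour subgame $c_t$ as well as to the full game. Its derivation uses only two facts: additivity of imbalances, $\Delta_{t,N\setminus\{i\}}=\Delta_{t,N}-\Delta_{t,i}$, and the linear hourly form \eqref{eq:charfun}. Both hold hour by hour, so $\mathrm{VCG}(c_t)=\mathrm{MCC}(c_t)-\mathrm{MP}(c_t)$ componentwise for every $t$.

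Second, I will check that MP is additive. By \eqref{eq:mp}, $x_i^{\rm MP}=\sum_t \lambda_{t,N}\Delta_{t,i}$ is already a sum of per-hour terms. The hourly MP allocation in subgame $c_t$ is $\lambda_{t,N}\Delta_{t,i}$, since the grand-coalition spread of subgame $c_t$ is $\lambda_{t,N}$. Hence $\mathrm{MP}(c)=\sum_t \mathrm{MP}(c_t)$.

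Combining these with Lemma~\ref{lem:MCCAdd} gives
\begin{equation*}
\mathrm{VCG}(c)=\mathrm{MCC}(c)-\mathrm{MP}(c)=\sum_{t}\bigl(\mathrm{MCC}(c_t)-\mathrm{MP}(c_t)\bigr)=\sum_t \mathrm{VCG}(c_t).
\end{equation*}

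As a sanity check I will also verify this directly from \eqref{eq:VCG_others_cost1}: $x_i^{\rm VCG}=\sum_t(\lambda_{t,N}-\lambda_{t,N\setminus\{i\}})\Delta_{t,N\setminus\{i\}}$ is visibly a sum of hourly VCG transfers.

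The only delicate point, and the step I expect to need the most care, is confirming that the per-hour VCG and MP allocations are well defined as the mechanisms applied to $c_t$. This requires the spreads and imbalances to be determined hour by hour, which relies on the additive bidding of Section~\ref{sec:bidding} and on the price-taker assumption (Remark~\ref{remark1}). Under those assumptions no intertemporal coupling enters, and the argument is a direct termwise summation.
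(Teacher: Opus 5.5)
Your proof is correct and follows the same route as the paper: it combines the identity $x_i^{\rm VCG}=x_i^{\rm MCC}-x_i^{\rm MP}$ with the additivity of MCC (Lemma~\ref{lem:MCCAdd}) and of the marginal price allocation. Your explicit check that the identity also holds within each hourly subgame fills in a step the paper leaves implicit, and your direct observation from \eqref{eq:VCG_others_cost1} would on its own already suffice as a proof.
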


\begin{proof}
By \eqref{eq:vcg-mcc-mp}, $x_i^{\mathrm{VCG}} = x_i^{\mathrm{MCC}} -
x_i^{\mathrm{MP}}$ for all $i \in N$. MCC is additive
(Lemma~\ref{lem:MCCAdd}) and marginal price is additive by
construction, so their difference is additive over the same
per-period decomposition, proving the claim.
\end{proof}

\subsection{Existence, Well-Definedness, and Uniqueness of the Gately point}\label{app:gat}
\begin{proof}
By \eqref{equal}, the Gately point exists precisely when
$c(N) - \sum_{j\in N} c(\{j\}) \neq 0$. By sub-additivity, as defined in Definition 2, we have $c(N) \le \sum_{j\in N} c(\{j\})$; the denominator of \eqref{equal} is therefore never positive, and the Gately point exists if and only if
\begin{equation}\label{eq:gat_negative}
    c(N) < \sum_{j\in N} c(\{j\}).
\end{equation}

Consider an hour $t \in T$. By the two-price settlement (Section \ref{sec:cost}) and price-taking assumptions, the hourly cost $c_t(S)$ of any coalition $S$ depends on whether $S$ is long or short that hour, and is proportional to $(\Delta_{t,S})^+$ or $(\Delta_{t,S})^-$ accordingly, with the same proportionality constant for every $S$ (Remark \ref{remark1}). Either way, $\sum_i c_t(\{i\}) \ge c_t(N)$, with equality unless the consumers' imbalances mix signs that hour, i.e.,\ unless $\bigl|\sum_i \Delta_{t,i}\bigr|<\sum_i |\Delta_{t,i}|$; when they do, any consumer $i$ whose imbalance opposes the portfolio's satisfies
$c_t(\{i\}) + c_t(N\setminus\{i\}) > c_t(N)$, and since every other hour contributes a non-negative term to the same sum, $c(\{i\}) + c(N\setminus\{i\}) > c(N)$ overall, i.e., $c(\{i\}) > M_i$ strictly.

This, together with $c(\{i\}) \ge M_i$ for every $i$ from sub-additivity, are exactly the conditions of Staudacher and Anwander \cite{Staudacher2019} guaranteeing that whenever \eqref{eq:gat_negative} holds, the equal-propensity solution \eqref{equal} is the unique imputation.

Conversely, if $\bigl|\sum_i \Delta_{t,i}\bigr| = \sum_i |\Delta_{t,i}|$ for every $t \in T$, every hourly gap is zero, so $c(S) = \sum_{i\in S} c(\{i\})$ for all $S \subseteq N$: the game is inessential, and the unique core allocation is $x_i^{\mathrm{GP}} = c(\{i\})$ for all $i \in N$.
\end{proof}

\subsection{Full Allocation Results}\label{app:full_ratio}
Fig. \ref{fig:app_full_ratio} reports the allocation ratio for all consumers. The results show a general tendency for the flat-rate allocation to benefit larger consumers while disadvantaging smaller ones.
\begin{figure}[]
    \centering
    \input{tex/figures/p_cost_ratio.pgf}
    \vspace{-8mm}
    \captionsetup{skip=4pt}
    \caption{\small{Plot of ratio between allocated cost and singleton cost for all consumers.}}
    \label{fig:app_full_ratio}
\end{figure}

\subsection{Practical Billing and Cost Individualization}\label{BusinessModel}
We introduce the \textit{individualization grade} $\alpha^{\rm ind}\in[0,1]$ as a contractual parameter determining how the difference between an expected socialized imbalance charge and a realized individualized allocation is shared between consumers and the facilitator. Under full individualization $(\alpha^{\rm ind}=1)$, each consumer ultimately pays its realized grand-coalition allocation $x^N_i$, so the balancing risk, reduced through aggregation, is passed through to consumers individually. Under full socialization $(\alpha^{\rm ind}=0)$, consumers pay an expected socialized imbalance charge, while the facilitator bears the difference between the collected socialized charges and the realized portfolio imbalance cost. Full socialization, however, does not differentiate consumers according to the heterogeneity in netting contributions established in Sections~\ref{sec:results_savings} and~\ref{sec:results_stability}. We formalize this choice through a monthly billing rule. The cost of consumer $i$ in month $m$ is 
\begin{subequations} 
\begin{equation} 
\begin{split} C_{i,m} =\;
& \sum_{t\in \mathcal{T}_m} k_t P^{\rm D}_{t,i} + \left(1-\alpha^{\rm ind}\right) \hat{\lambda}^{\rm soc}_{m} \sum_{t\in \mathcal{T}_m} P^{\rm D}_{t,i} \\ 
&+ \alpha^{\rm ind} \left( \hat{\lambda}^{\rm ind}_{i,m} \sum_{t\in \mathcal{T}_m}P^{\rm D}_{t,i} + Q_{i,m} \right), 
\end{split} 
\label{eq:consumer_cost} 
\end{equation} 
where $k_t$ is the agreed per-unit rate covering all cost components other than imbalance, such as the day-ahead market price and overhead. Furthermore, $\hat{\lambda}^{\rm soc}_{m}$ and $\hat{\lambda}^{\rm ind}_{i,m}$ are the expected socialized and individualized imbalance costs per MWh, respectively, for month $m$.\footnote{These quantities are not imbalance price spreads. They are estimates of expected imbalance costs and therefore do not depend on the realized deviation.} Recall that $P^{\rm D}_{t,i}$ denotes consumer $i$'s consumption in period $t$. In addition, $\mathcal{T}_m$ is the set of settlement periods in month $m$. The reconciliation term is 
\begin{equation} 
Q_{i,m} = x^N_{i,m} - \hat{\lambda}^{\rm ind}_{i,m} \sum_{t\in \mathcal{T}_m}P^{\rm D}_{t,i}, \label{eq:reconciliation} 
\end{equation} 
where $x^N_{i,m}$ is the realized grand-coalition allocation obtained by applying the mechanism of Section~\ref{sec:alloc} to the data for month $m$.\footnote{For simplicity, we assume that $x^N_{i,m}$ is available when the bill for month $m$ is issued. In practice, reconciliation may occur in a subsequent billing cycle because imbalance prices may not be known when the initial bill is issued.} Substituting \eqref{eq:reconciliation} into \eqref{eq:consumer_cost} shows that the imbalance-related payment of consumer $i$ is 
\begin{equation} 
\left(1-\alpha^{\rm ind}\right) \hat{\lambda}^{\rm soc}_{m} \sum_{t\in \mathcal{T}_m}P^{\rm D}_{t,i} + \alpha^{\rm ind}x^N_{i,m}. \label{eq:hybrid_imbalance_charge} 
\end{equation} 
\end{subequations}

Thus, the billing rule interpolates between the expected socialized charge and the realized individualized allocation. At $\alpha^{\rm ind}=1$, the reconciliation term closes the full gap between the expected individualized charge and $x^N_{i,m}$. At $\alpha^{\rm ind}=0$, no reconciliation occurs, and the facilitator bears the difference between the collected socialized charges and the realized portfolio imbalance cost. For intermediate values of $\alpha^{\rm ind}$, this difference is shared between the facilitator and consumers. Individualization therefore does not eliminate balancing risk but reallocates it between the facilitator and consumers. Whether the resulting allocation is stable depends on the individualization grade. Full individualization removes the facilitator's direct financial incentive to minimize portfolio imbalance because the realized imbalance cost is passed through to consumers regardless of the facilitator's bidding performance. Any remaining incentive is indirect and arises through market competition. Partial individualization preserves some direct exposure for the facilitator while differentiating consumer charges according to their netting contributions. In an extended setting with flexible demand or consumer-provided operational information, a higher $\alpha^{\rm ind}$ could also strengthen incentives to report outages promptly or adjust flexible consumption. Such behavioral responses are outside the inelastic-demand setting studied in this paper.

Fig.~\ref{fig:costInd} illustrates this trade-off for the Gately point allocation, which provides the largest minimum coalition excess among the computationally tractable, budget-balanced mechanisms in the case study. The resulting hybrid allocation is group rational only for individualization grades above 0.70. Below this threshold, at least one coalition could reduce its total charge by leaving the grand coalition and adopting a fully individualized allocation. The individualization grade is therefore not merely a risk-sharing parameter; it also affects coalition stability. In this case study, the facilitator cannot socialize more than 30\% of the imbalance charge without violating at least one group rationality inequality.

\begin{figure}[t]
    \centering
    \input{tex/figures/socialized_vs_individualized.pgf}
    \caption{\small{Consumer-level imbalance charges under the Gately point allocation as a function of the individualization grade. Charges are expressed per MWh of consumption. The dashed line marks the minimum individualization grade at which the resulting allocation is group rational.}}
    \label{fig:costInd}
\end{figure}

\end{document}